\documentclass[aps,prx,reprint,floatfix,superscriptaddress]{revtex4-2}
\usepackage{amsmath,amssymb,amsthm,bm,graphicx}
\usepackage{microtype}
\usepackage{silence}
\usepackage{hyperref}
\hypersetup{colorlinks=false, pdfborder={0 0 0}}
\graphicspath{{figures/}}

\newtheorem{theorem}{Theorem}
\newtheorem{proposition}[theorem]{Proposition}
\newtheorem{lemma}[theorem]{Lemma}
\newtheorem{corollary}[theorem]{Corollary}

\newtheorem{remark}[theorem]{Remark}
\newtheorem{open}{Open Problem}

\begin{document}
\title{Correlations decide a shallow-circuit advantage}
\author{Zijian Gong}
\affiliation{Southern University of Science and Technology, Shenzhen, Guangdong 518055, China}
\affiliation{Faculty of Computility Microelectronics, Shenzhen University of Advanced Technology, Shenzhen, Guangdong 518107, China}
\affiliation{Guangdong Provincial Key Laboratory of Computility Microelectronics, Shenzhen, Guangdong 518107, China}
\author{Zhaobin Lyu}
\affiliation{Pengxin Quantum Technology (Shenzhen) Co., Ltd., Shenzhen 518107, China}
\author{Jingjing Hu}
\email{hujingjing@suat-sz.edu.cn}
\affiliation{Faculty of Computility Microelectronics, Shenzhen University of Advanced Technology, Shenzhen, Guangdong 518107, China}
\affiliation{Guangdong Provincial Key Laboratory of Computility Microelectronics, Shenzhen, Guangdong 518107, China}
\affiliation{Pengxin Quantum Technology (Shenzhen) Co., Ltd., Shenzhen 518107, China}
\author{Dengfeng Li}
\email{lidengfeng@suat-sz.edu.cn}
\affiliation{Faculty of Computility Microelectronics, Shenzhen University of Advanced Technology, Shenzhen, Guangdong 518107, China}
\affiliation{Guangdong Provincial Key Laboratory of Computility Microelectronics, Shenzhen, Guangdong 518107, China}
\affiliation{Pengxin Quantum Technology (Shenzhen) Co., Ltd., Shenzhen 518107, China}
\author{Shuoming An}
\email{anshuoming@suat-sz.edu.cn}
\affiliation{Faculty of Computility Microelectronics, Shenzhen University of Advanced Technology, Shenzhen, Guangdong 518107, China}
\affiliation{Guangdong Provincial Key Laboratory of Computility Microelectronics, Shenzhen, Guangdong 518107, China}
\affiliation{Pengxin Quantum Technology (Shenzhen) Co., Ltd., Shenzhen 518107, China}
\date{\today}

\begin{abstract}
Quantum computers are claimed to produce samples ordinary classical computers cannot reproduce. The sharpest such claim known pits constant-depth quantum circuits, given one entangled state per run, against shallow classical circuits of few-input gates and limited randomness. No efficient test was known that certifies such a claim from classical samples alone. Here we show that what such a test must read is decided by the classical class's pattern of single-bit and pair correlations, not by a distance. The test rests on one inequality: any machine producing the samples sits no further from the target than the fraction it mislabels plus the deviation of its string half, the bits beyond the label, from uniform. A collapse theorem, machine-checked in the Lean~4 proof assistant, confines that second term, one setting of the sampler's random input bits at a time. Four natural checks provably fail; a fifth, reading those same correlations, catches a far-from-target construction they cannot reach, and is exact on the class of few-input gates. The label test is proven and sample-optimal in its own tolerance up to a logarithmic factor; the fifth check is sound, not only effective, on the bounded pinned-residue class, pairwise-uniform samplers whose residue, the string's weight modulo the prime, is pinned by few seeds and whose seed sharing is bounded. What binds an experiment is the $43$-qubit entangled resource state at fidelity near $0.99$, not the readout. What remains are two named open problems; a positive answer to the first would extend the guarantee to the full class.
\end{abstract}

\maketitle

\section{Introduction}\label{sec:intro}

A sampling-advantage claim is experimental, and verification, not the claim, is the credibility infrastructure of experimental science \cite{Huang2026vastworld}: the same asymmetry governs the energy scale that announces a particle and the symmetry that certifies a phase long after its order parameter is measured. Sampling advantage is where that infrastructure is thinnest: the check there is a list of output strings and nothing else, the device cannot be opened and its diagram cannot be trusted. The reviews find the existing checks resting on a proxy whose assumptions break or on a device-dependent certificate \cite{LaRose2026history,HangleiterEisert2023}. 

Cross-entropy benchmarking scores a device by how closely its samples match the ideal circuit. The debate over it, from the $2019$ Google claim \cite{Arute2019} and its hardness analysis \cite{BFNV2019} to the later peaked-circuit claims, their classical-simulation rebuttal \cite{KremerDupuis2026}, and the benchmarking proposals that followed \cite{Bentsen2026bench}, turned on exactly that. A classical algorithm reproduced $5$--$12\%$ of the cross-entropy scores those experiments reported \cite{GaoKalinowskiChouLukinBarakChoi2024}. The hardness of that score rests on XQUATH, the linear cross-entropy quantum threshold assumption, which posits that estimating a random circuit's output probabilities is hard on average and remains unproved \cite{AaronsonGunn2020}. A polynomial-time classical algorithm matches a noisy circuit as soon as the output spreads over many outcomes \cite{AharonovGaoLandauLiuVazirani2023}, a criterion the simulability literature now contests with one of its own, read off correlations instead \cite{Lee2025cmi}. A sampler and a verifier, both unconditional, reading classical samples alone would close the loop.

The task has the shape of a Bell test: the adversary is a model class, the prize an inequality no member of it satisfies. The class is the constant-locality classical samplers, each output bit a function of a fixed number of random input bits, and the witness is classical samples alone. Bene Watts and Parham supply the target \cite{BenewattsParham2026}: $D_n=(X,f(X))$ with $f=\mathrm{majmod}_p(X)\oplus\mathrm{parity}(X)$, the majority-modulo-$p$ of the Hamming weight XOR its parity, $p$ an odd prime, the tag's period in the weight being $2p$, sampled in constant depth within total-variation distance $1/6+O(n^{-c})$, where total variation is the largest probability gap two laws can put on one event and $n$ is the string length. Each run consumes one GHZ state, an entangled state whose qubits all flip together, and that state is the one experimental resource the verifier asks for. Bounded-fan-in classical circuits of bounded randomness cannot beat the trivial distance, the $1/2$ of a uniformly random string, at constant depth: their depth must grow at least as $\log\log n$. Ruling out a \emph{fixed} circuit takes a few samples; ruling out every NC$^0$-producible distribution -- everything the constant-depth circuits of bounded fan-in can output -- is, in their words, ``a harder task'' \cite{BenewattsParham2026}.

A sample is $(z,b)$, the string half $z$ with a tag $b$ computed from it, and the violation rate is $\beta(P)=\Pr[b\ne f(z)]$, the fraction of samples whose tag disagrees with the target's own. What classical samples alone can certify is the question.

Here we show that what decides such a test is the class's own correlation lattice, the single-bit and pair correlations sitting a power of two apart, rather than a distance. The test rests on one inequality in total variation:
\begin{equation}\label{eq:chain}
\mathrm{TV}\bigl(P,D_n\bigr)\;\le\;\beta(P)+\mathrm{TV}\bigl(P_z,U_{n-1}\bigr)
\end{equation}

A sampler far from the target is either wrong on many labels or far from uniform on its string half $P_z$, the marginal law of the string $z$ against the uniform string $U_{n-1}$. For an imposter, a sampler other than the target, at distance $\varepsilon$, the separation the verifier is built to detect, the inequality splits,
\begin{equation}\label{eq:dichotomy}
\begin{gathered}
\mathrm{TV}\bigl(P,D_n\bigr)\ge\varepsilon\;\Longrightarrow\;\beta(P)\ge\varepsilon/2\\
\qquad\text{or}\quad\mathrm{TV}\bigl(P_z,U_{n-1}\bigr)\ge\varepsilon/2 .
\end{gathered}
\end{equation}

The first case of Eq.~\eqref{eq:dichotomy} is proven unconditionally, at a system-size-independent budget and optimal up to a logarithmic factor, its lower bound the two-point Le Cam argument, which separates two nearby hypotheses; the second carries the difficulty, confined by a collapse theorem to low dimension, one fixing at a time, a fixing being one way of setting the seed bits. 

Four symmetric checks of that structure are provably insufficient, Sec.~\ref{sec:dichotomy}: an explicit $\beta=0$ construction, far from the target, clears at every polynomial budget the residue door, reading the law of the weight modulo $2p$, the modulus that carries the majority's input and the parity together; the rank door, reading the sample matrix's rank over $\mathbb F_2$, the two-element field whose addition is XOR; the weight-law door, reading the histogram of the weight; and anticoncentration, reading whether the string half spreads over many strings or concentrates on a few. The fifth reads a statistic that is quantized on the class, dyadic, a power of two apart, rather than a continuum. It is sound, not only effective, on the bounded pinned-residue class: samplers that are pairwise uniform, every single-bit and pair correlator vanishing, whose weight modulo the prime is already fixed by a bounded set of seed bits, and whose read is bounded either in fan-out, the number of outputs one seed feeds, or in the seeds that more than one output reads.

Unconditional separations can reach hardware demonstrations because they carry their own classical check \cite{Bravyi2018,BeneWattsKothariSchaefferTal2019,BGKT2020,CahaCoiteuxRoyKoenig2026}; a sampling claim carries none. The nearest separation is Grier \emph{et al.}'s \cite{GrierKaneMorrisOstuniWu2026}, and Kane, Ostuni and Wu put the hardness in the $z$-marginal itself \cite{KaneOstuniWu2024hamming}, where $D_n$'s marginal is uniform and the difficulty is the constraint $b=f(z)$. 

The class's further constructions and lower bounds \cite{LovettViola2012,CaoEisert2026,Zhang2024,Viola2012complexity} delimit what it produces, as do the structure theorems \cite{KaneOstuniWu2025symmetric} of Secs.~\ref{sec:dichotomy} and \ref{sec:perfixer}, and the line has been sharpened from the sampler's own side, dyadic product masses being samplable in constant depth where the non-dyadic ones provably are not \cite{HoracsekLeeShinkarViolaZhou2026}. The same program separates against other classical models, biased threshold circuits \cite{deOliveira2025qudit} and bounded-resource language models \cite{Arunachalam2026llm} among them.

Soundness against every NC$^0$ sampler, classical samples alone, no device model, no unproved conjecture: four conditions define the task. Every known verifier drops at least one of them, requiring instead a sub-logarithmic-depth device \cite{CoudronStarkVidick2021}, an entropy assumption \cite{BFNV2019}, device-dependent certificates \cite{Ringbauer2025,Martiel2026}, one-way functions \cite{BenedettiBuhrmanWeggemans2026}, a bounded distinguisher class \cite{BunDesaiPinto2026}, learning with errors \cite{Mahadev2018} or lattice knowledge \cite{Gheorghiu2026} -- the cryptographic structure such tests demand being itself a subject of study \cite{BartusekShalit2026}. The other verifiers give up the device model, the samples, or the proof of hardness: a shallow device \cite{Huang2024learning}, a peaked sampler verified by computing ideal outcome probabilities \cite{Deshpande2025peaked}, verification from the circuit rather than from its samples \cite{BravyiParhamTran2026}, a heuristic obfuscation whose hardness is conjectured \cite{Yan2026}. 

Only identity testing, checking samples against the known law directly, keeps all four, at $\Theta(2^{(n-1)/2}/\varepsilon^2)$ samples, exponential in the string length \cite{Canonne2022,Paninski2008}, with the junta bounds, on functions of few seeds, and the independence and probe-depth bounds, the latter restricting how many coordinates a query reads, placing that cost \cite{Blais2009,AliakbarpourBlaisRubinfeld2016,AlonAndoniKaufmanMatulefRubinfeldXie2007,Viola2023sampling} and $D_n$ itself classically samplable in depth $O(\log^2 n)$ \cite{Viola2012complexity,Czumaj2015}. Polynomial samples suffice under an efficiently samplable promise when inefficient verification is allowed \cite{CavalarGoldinGrayHirokaHsiehMorimae2025}, the no-structure route is unconditional relative to a random oracle \cite{YamakawaZhandry2022}, and the relation task has run at $105$ qubits by Google Quantum AI \cite{Kumar2025}.

The functional test is proven unconditionally, its proof machine-checked in the Lean~4 proof assistant, and a per-fixing collapse theorem reduces every sampler $S$ the test does not already reject, on a $1-\tfrac83\beta(S)$ fraction of fixings, to at most $d+O(p^2\log p)$ residue-relevant live bits, or to a residue parity that is a $d$-junta, a function of at most $d$ of the live bits, the bits a fixing leaves random; here the function is a parity, and $d$, the locality, is the number of seed bits each output reads. The fifth check proves soundness on a bounded pinned-residue class as well as rejecting. A $d$-local sampler routes its randomness through hub seeds of decreasing fan-out, seeds that many outputs read, and a read bounded in fan-out or in the seeds that more than one output reads cannot supply them. The residual is neither affine, the string half a linear function of its random bits plus a constant, nor symmetric, unchanged by permuting the string's coordinates: the affine case is proven at every constant locality, the symmetric one confined by Kane--Ostuni--Wu.

Two facts underpin the reduction. The first is that $f(z)$ depends only on $|z|\bmod 2p$, each $z$-bit having influence $1-2/p+o(1)$, that is, flipping it flips the tag with that probability, as Sec.~\ref{sec:struct} records. The second is that every sampler of App.~C of the Supplemental Material either has $\beta>3/16$, half the separation $\varepsilon=3/8$ set in Sec.~\ref{sec:prelim}, or fails a structure door, or is seed-limited. A structure door is one criterion a sampler must pass, and a seed-limited sampler draws on a bounded number of seeds that the anti-concentration door reaches only at a budget of order its support.

Sec.~\ref{sec:prelim} sets the model, Secs.~\ref{sec:struct}--\ref{sec:block} simplify the target, Sec.~\ref{sec:dichotomy} states the verifier and the open core, Sec.~\ref{sec:perfixer} proves the collapse, Sec.~\ref{sec:lean} reports the formalization, and Sec.~\ref{sec:open} closes.

\section{Preliminaries}\label{sec:prelim}
The verifier is defined against one fixed target family. Let $D_n=(X,f(X))$ with $X\sim\{0,1\}^{n-1}$ uniform and $f(x)=\mathrm{majmod}_p(x)\oplus\mathrm{parity}(x)$, where $\mathrm{majmod}_p(x)=\mathbf 1[|x|\bmod p>p/2]$ and $p$ is an odd prime; Watts--Parham take $p=n^c$, $c\in(0,1/2)$, in their sampling theorems and $p=\Theta(n^\alpha)$, $\alpha\in(\delta'/3,1/3)$, in the hardness regime \cite{BenewattsParham2026}. A sample is the $n$-bit string $(z,b)$ with $z=X$ and $b=f(X)$; each sample is a random string together with a tag, its label, computed from its weight, the number of ones. Shallow quantum circuits produce tag distributions that equally shallow classical circuits cannot imitate, and what this paper verifies is unconditional: the distance the target is sampled to is a constant, $1/6+O(n^{-c})$ for the GHZ-advice circuit, its advice the per-run GHZ state, and the thresholds need only any constant below $1/2$. In the numerical sections $n$ denotes the number of $z$ bits; the analytical statements write the same count as $n-1$.

A distribution $P$ on $\{0,1\}^n$ is \emph{NC$^0$-samplable} if $P=S(U_\ell)$ for $S:\{0,1\}^\ell\to\{0,1\}^n$ each of whose output bits $(z,b)$ depends on $O(1)$ seed bits. A sampler is \emph{$d$-local} when every output bit, the tag included, reads at most $d$ of them, so $d$ bounds each output separately and a claim of locality $d$ cannot be met by a tag reading more. Distances are total variation, $\mathrm{TV}(P,Q)=\max_A|P(A)-Q(A)|=\tfrac12\sum_x|P(x)-Q(x)|$. A \emph{verifier} takes $M$ independent samples and accepts or rejects; it is complete if it accepts $D_n$ with probability at least $1-\eta$, $\eta$ its allowed error, and sound if it rejects every NC$^0$-samplable $P$ with $\mathrm{TV}(P,D_n)\ge\varepsilon$ with probability at least $1-\eta$. We call an NC$^0$-samplable distribution an \emph{imposter}, the distribution-testing analogue of a spoofing distribution, and a test \emph{imposter-agnostic} if its definition does not depend on the structure of the imposter it rules out, so soundness, wherever established, is uniform over the class rather than certified construction-by-construction; the term is unrelated to agnostic testing.

Define the \emph{functional-violation rate}
\begin{equation}\label{eq:beta}
\beta(P):=\Pr_{(z,b)\sim P}[b\ne f(z)].
\end{equation}
Since $f$ is poly-time, $\beta$ is estimable to additive $\delta$ from $M=O(\delta^{-2}\log(1/\eta))$ samples, independent of $n$, and $\beta(P)\le\mathrm{TV}(P,D_n)$. The tolerance $\delta$ is the test's free estimation slack, fixed at the operating point to a third of the completeness margin, $\delta=(\varepsilon/2-\gamma)/3=1/144$ with $\gamma$ the device's own distance from the target: one third absorbs the estimator's own deviation, one the readout bias, and the third separates Test~1's rejection level from the dichotomy's bound, so that the two are exactly one tolerance apart. This $\delta$ is distinct from Watts--Parham's randomness exponent $\delta'$ and the per-group increments $\delta_j$ of Sec.~\ref{sec:perfixer}.

A \emph{junta} is a function of few seed bits, a $d$-junta of at most $d$. A \emph{fixing} assigns values to all but a \emph{live set} of seeds, those left random. For a sampler $S$ we write $\beta(S):=\beta(S(U_\ell))$, the violation rate of its output distribution.

The definition is deliberately information-theoretic, and one budget calibrates the window in which a physical device can use it. Completeness must tolerate the device's own distance $\gamma$ from $D_n$, so acceptance must extend to every distribution that close, and the window $(\gamma,\varepsilon/2-\delta)$ in which the device's imperfections and the adversary's freedom cannot be separated plays the role of the detection-loophole margin of a Bell test: the design is exactly tight at the stated $\gamma=1/6$, so the device's own $O(n^{-c})$ correction is charged to the readout.

At the operating point $\varepsilon=3/8$, the smallest dyadic fraction with denominator at most $8$ that clears $2\gamma=1/3$, Test~1, defined below, rejects at $\hat\beta>\varepsilon/2-\delta=13/72$ with $\hat\beta$ estimating $\beta$ of Eq.~\eqref{eq:beta}, a margin of $\varepsilon/2-\delta-\gamma=1/72$ over the stated device distance. Either way an honest device passes Test~1, while every imposter with $\beta\ge3/16$ dies at Test~1 alone. The account is anchored to that stated constant rather than to the weaker analyzed bounds, and a direct evaluation of the circuit corroborates that it is attained at accessible scale, giving $0.1677$ at the operating depth and approaching $1/6$ as the system grows; the arithmetic of the account, the readout budget, and that evaluation are App.~D.

Two features of the operating point are worth carrying here. The first is that noise enters at two places and \emph{the binding one is not the readout}: the certificate's requirement is the preparation and gate error of the $43$-qubit resource state, one qubit per output bit, whose fidelity must reach about $0.99$, against a readout requirement a factor of about five below present transmon performance, at a polynomial sample budget; prepared states at this scale have reached fidelities of $0.595(8)$ at $60$ qubits and $0.723(10)$ at $36$ \cite{Bao2024ghz}. The second is that the noise-optimal prime is $3$, where that budget is loosest, while the proven classical hardness of $D_n$ is stated for $p=\Theta(n^\alpha)$: the operating point calibrates the verifier's threshold rather than certifying an advantage at $p=3$ itself, and Test~1 and Test~2 are defined for every odd $p\ge3$. The class the verifier is defined against is NC$^0$ itself, Sec.~\ref{sec:dichotomy}.

\section{Structural simplification}\label{sec:struct}
The target function is globally sensitive yet weight-local: it reads $z$ only through $|z|\bmod 2p$, so a verifier can reduce the functional constraint to a single $\mathbb Z/2p\mathbb Z$ statistic.

\begin{lemma}\label{lem:struct}
$f(z)=F(|z|\bmod 2p)$ for $F:\mathbb Z/2p\mathbb Z\to\{0,1\}$, verified in Lean. Each bit of $z$ has influence
$\mathrm{Inf}_i(f)=\Pr_z[f(z)\ne f(z\oplus e_i)]=1-2/p+o(1)$.
\end{lemma}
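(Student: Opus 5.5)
The plan has two parts. First we identify $F$ using the Chinese remainder theorem. Then we compute the influence exactly in terms of the law of the Hamming weight modulo $p$, and close with an equidistribution estimate.

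\emph{Step 1: the reduction to $|z|\bmod 2p$.} Since $p$ is odd, $\gcd(2,p)=1$, so $|z|\bmod 2p$ determines both $|z|\bmod p$ and $|z|\bmod 2$. We set
\[
F(r)=\mathbf 1[(r\bmod p)>p/2]\oplus(r\bmod 2)\quad\text{for } r\in\mathbb Z/2p\mathbb Z .
\]
This is well defined because $p\mid 2p$ and $2\mid 2p$. It satisfies $f(z)=F(|z|\bmod 2p)$ by the definition of $\mathrm{majmod}_p$ and $\mathrm{parity}$. This part is purely arithmetic, and it is the part that gets formalized in Lean.

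\emph{Step 2: exact influence formula.} Flipping bit $i$ changes $w=|z|$ by $\pm1$, so the parity term always flips. Hence $f$ flips exactly when $\mathrm{majmod}_p$ does \emph{not} change. On residues modulo $p$, the map $w\mapsto w+1$ changes $\mathrm{majmod}_p$ exactly at the two boundaries, namely $(p-1)/2\to(p+1)/2$ and $p-1\to0$. Let $w'$ be the weight of the other $n-2$ coordinates of $z$. This gives two cases.
\begin{itemize}
\item If $z_i=0$, then $w=w'$, and the flip leaves $f$ unchanged iff $w'\bmod p\in\{(p-1)/2,\,p-1\}$.
\item If $z_i=1$, then $w=w'+1$, and the flip leaves $f$ unchanged iff $w'+1\bmod p\in\{(p+1)/2,\,0\}$, i.e.\ iff $w'\bmod p\in\{(p-1)/2,\,p-1\}$.
\end{itemize}
In both cases the condition reads the same two residues of $w'$, and $w'\sim\mathrm{Bin}(n-2,1/2)$ is independent of $z_i$. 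Therefore
\[
1-\mathrm{Inf}_i(f)=\Pr\bigl[w'\equiv (p-1)/2\bigr]+\Pr\bigl[w'\equiv p-1\bigr]\pmod p .
\]

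\emph{Step 3: equidistribution of the binomial modulo $p$.} This is the main obstacle, because it is where the regime of $p$ enters. Let $m=n-2$ and $\omega=e^{2\pi i/p}$. Fourier inversion on $\mathbb Z/p\mathbb Z$ gives
\[
\Pr[w'\equiv r]=\frac1p\sum_{k=0}^{p-1}\omega^{-kr}\Bigl(\tfrac{1+\omega^k}{2}\Bigr)^{m}.
\]
For $k\ne0$ we have $|(1+\omega^k)/2|=|\cos(\pi k/p)|\le\cos(\pi/p)\le e^{-\pi^2/(2p^2)}$. Hence
\[
\bigl|\Pr[w'\equiv r]-1/p\bigr|\le e^{-\pi^2 m/(2p^2)} .
\]
This bound is $o(1)$, in fact superpolynomially small, whenever $p=o(\sqrt n)$. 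That covers constant $p$ and the Watts--Parham regimes $p=n^c$ with $c<1/2$, including $p=\Theta(n^\alpha)$ with $\alpha<1/3$.

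Substituting into Step 2 gives $\mathrm{Inf}_i(f)=1-2/p+O(e^{-\Omega(n/p^2)})=1-2/p+o(1)$, uniformly in $i$. I would state explicitly that the $o(1)$ requires $p=o(\sqrt n)$: for $p\gtrsim\sqrt n$ the weight concentrates on $O(\sqrt n)$ residues and the formula fails.
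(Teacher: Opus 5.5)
Your proof is correct and is essentially the paper's: the CRT definition of $F$, the parity factor flipping at every step so that $f$ flips exactly where $\mathrm{majmod}_p$ is flat, and character decay of the binomial weight. The only differences are bookkeeping. You count the two boundaries on $\mathbb Z/p$ and condition on the other $n-2$ coordinates, which treats the $z_i$--$|z|$ dependence more cleanly than the paper's loose $\pm1$ shift; the paper counts the $2p-4$ jumping pairs on $\mathbb Z/2p$ and uses the mod-$2p$ decay.

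One correction to your closing aside. The additive formula $1-2/p+o(1)$ does not fail for $p\gtrsim\sqrt n$: unimodality of the binomial gives $1-\mathrm{Inf}_i=O(1/\sqrt n)$ there, and $2/p=o(1)$ as well, so only the equidistribution route breaks. Also, ``superpolynomially small'' requires $n/p^2=\omega(\log n)$, not merely $p=o(\sqrt n)$.
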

\begin{proof}
$f(z)=\mathbf 1[|z|\bmod p>p/2]\oplus(|z|\bmod 2)$, and the pair $(|z|\bmod p,\,|z|\bmod 2)$ is determined by $|z|\bmod 2p$ for odd $p$. For the influence: flipping $z_i$ shifts $|z|$ by $\pm1$, hence $|z|\bmod 2p$ by $\pm1$, and $F$ flips at the flipped residue exactly when $\mathrm{majmod}$ does not, since the parity factor flips at every step. The step function $\mathrm{majmod}$ on $\mathbb Z/p$ jumps at exactly two of the $p$ adjacent pairs, hence at four of the $2p$ adjacent pairs of $\mathbb Z/2p$, and the parity factor at every one.  So $F$ jumps exactly where $\mathrm{majmod}$ is flat, at $2p-4$ of the $2p$ pairs, giving the exact residue-uniform influence $c_F=1-2/p$.  The jump count and the constant are verified in Lean. The binomial weight $|z|$ for uniform $z$ has residue within total variation $o(1)$ of uniform on $\mathbb Z/2p$, whence $\mathrm{Inf}_i(f)=c_F+o(1)$; this step is machine-checked end to end in the Fourier module, Sec.~\ref{sec:lean}; numerically $c_F=1/3,3/5,5/7$ for $p=3,5,7$, Fig.~\ref{fig:residue}.
\end{proof}
The influence in Lemma~\ref{lem:struct} is that statistic's jump count, and the chain is one line. Flipping a bit moves $|z|$ by $\pm1$ and hence the residue by $\pm1$, so the tag flips exactly at the adjacent residues where $F$ jumps. $F$ jumps exactly at the pairs where $\mathrm{majmod}$ is flat, the parity factor flipping at every step,
\begin{equation}
\begin{aligned}
\mathrm{Inf}_i(f)&=\Pr_z\bigl[F(|z|\bmod 2p)\ne F(|z|\bmod 2p\pm1)\bigr]\\
&=1-\frac{2}{p}\;=\;c_F ,
\end{aligned}
\end{equation}
the jump count being $2p-4$ of the $2p$ adjacent pairs of $\mathbb Z/2p$, $\mathrm{majmod}$ jumping at four of them and the parity factor at every one. And $f$ is \emph{globally sensitive}: no proper subset determines it. A concrete instance fixes the picture: at $p=3$, $n=5$, the tag is determined by the weight modulo $6$, a single number the verifier reads off every sample.

\begin{figure}[!htbp]
\centering\includegraphics[width=\linewidth]{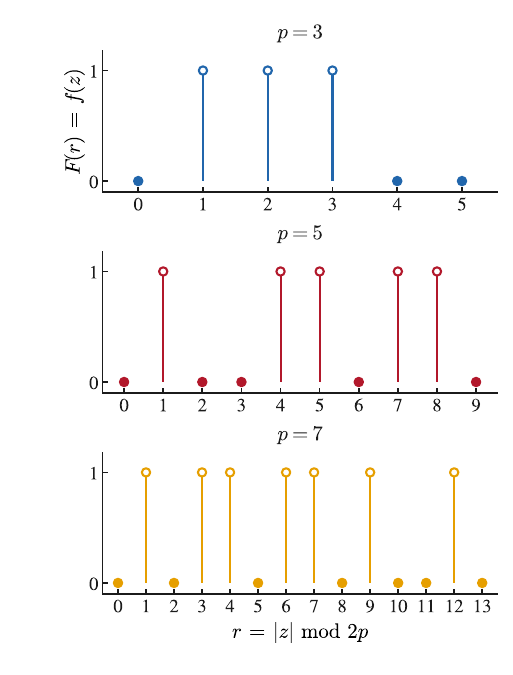}
\caption{Residue pattern of the target tag, the pattern a sample's tag is read from. One panel per prime, $p=3,5,7$. The horizontal axis is the residue, the Hamming weight $|z|$ of the sample's string half $z$ reduced modulo $2p$; the vertical axis is the tag $F(r)$, the answer the sample is meant to carry. Hollow markers are $F=1$ and filled markers $F=0$. The tag sees $z$ only through this pattern, which jumps at $2p-4$ of the $2p$ adjacent residues, and that jump count is the influence $1-2/p$ the rest of the paper reads: flipping one bit of $z$ changes the tag with exactly that probability.}
\label{fig:residue}
\end{figure}

\section{False shortcuts and the third door}\label{sec:block}
A natural hope is that small $\beta$ forces any NC$^0$ sampler to have small support, so that a collision test catches it; this fails. The $2p$-block sampler of App.~B achieves $\beta=0$ on a support of size $2^G$, exponential in the number $G$ of $2p$-blocks, each copying one seed bit, yet is caught by a single sample, and the escape sampler there carries $\beta=0$ at deficient rank; the two exhibit the failure of a support- or $\beta$-only argument and the need for the weight-histogram door (2c) and the rank door (2b) of Sec.~\ref{sec:dichotomy}.

\section{Dichotomy verifier and the open core}\label{sec:dichotomy}
The block-constancy structure that catches the $2p$-block sampler is a polynomial-time predicate of the sample; in general the verifier may compute \emph{any} polynomial-time statistic.

\begin{lemma}[TV chain rule]\label{lem:tvdec}
For any sampler $P=(z,b)$ and $D_n=(X,f(X))$,
\begin{equation}\label{eq:split}
\mathrm{TV}\bigl(P,D_n\bigr)\;\le\;\beta(P)+\mathrm{TV}(P_z,\,U_{n-1}),
\end{equation}
where $P_z$ is the $z$-marginal of $P$ and $U_{n-1}$ the uniform $(n-1)$-bit distribution. The chain rule is verified in Lean, as is the easy direction $\beta\le\mathrm{TV}$.
\end{lemma}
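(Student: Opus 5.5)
Introduce an intermediate distribution $Q=(z,f(z))$ with $z\sim P_z$: keep $P$'s string half and relabel every sample with the target's own tag. The triangle inequality gives
\begin{equation*}
\mathrm{TV}(P,D_n)\le\mathrm{TV}(P,Q)+\mathrm{TV}(Q,D_n).
\end{equation*}
It then suffices to identify the first term with $\beta(P)$ and the second with $\mathrm{TV}(P_z,U_{n-1})$.

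\emph{First term.} $P$ and $Q$ have the same $z$-marginal. Write $P(z,b)=P_z(z)\,P(b\mid z)$ and note $Q(z,b)=P_z(z)\mathbf 1[b=f(z)]$. For each $z$, the two conditional laws on the one-bit tag differ in total variation by exactly $P(b\ne f(z)\mid z)$: the tag law $\mathbf 1[b=f(z)]$ is a point mass, and the conditional mass of $P$ off that point is the whole discrepancy. Averaging over $z\sim P_z$ gives
\begin{equation*}
\mathrm{TV}(P,Q)=\sum_z P_z(z)\Pr[b\ne f(z)\mid z]=\beta(P),
\end{equation*}
which is Eq.~\eqref{eq:beta}. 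This is an equality, so nothing is lost here.

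\emph{Second term.} $Q$ and $D_n$ are both images of a string law under the same deterministic map $x\mapsto(x,f(x))$. The map is injective, since the $z$-component is retained, so it preserves total variation exactly: $\mathrm{TV}(Q,D_n)=\mathrm{TV}(P_z,U_{n-1})$. Data processing alone would give only $\le$, and $\le$ is all the statement needs.

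\emph{Easy direction.} For $\beta\le\mathrm{TV}$, take the event $A=\{(z,b):b\ne f(z)\}$. Then $D_n(A)=0$ and $P(A)=\beta(P)$, so $\beta(P)\le\mathrm{TV}(P,D_n)$ by the $\max_A$ form of the definition.

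No step is a genuine obstacle. The only point needing care is the conditional computation in the first term, where $P_z(z)=0$ must be handled; summing the joint masses directly, $\tfrac12\sum_{z,b}|P(z,b)-Q(z,b)|$, avoids conditioning altogether. For the Lean formalization, the natural route is to state the triangle inequality for the finite-sum form of TV and to prove the two term identities as finite-sum manipulations over $\{0,1\}^{n-1}\times\{0,1\}$.
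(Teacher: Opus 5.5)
Your proposal is correct and follows the paper's own proof: the paper inserts the same intermediate law $P_z\otimes\delta_f$, your $Q=(z,f(z))$ with $z\sim P_z$, applies the triangle inequality, and identifies the two terms as $\beta(P)$ and $\mathrm{TV}(P_z,U_{n-1})$. Your joint-mass computation of the first term (which sidesteps conditioning on $P_z(z)=0$), the injectivity argument for the second, and the event $\{b\ne f(z)\}$ for $\beta\le\mathrm{TV}$ fill in exactly the details the paper's sketch leaves implicit.
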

\begin{proof}
Insert the product distribution $P_z\otimes\delta_{f}$, with $\delta_{f}$ the point mass at $f(z)$, between $P$ and $D_n$ and apply the triangle inequality; the two resulting terms are $\beta$ and $\mathrm{TV}(P_z,U_{n-1})$. The full two-line argument is in App.~A.
\end{proof}

\begin{corollary}[Dichotomy]\label{cor:dich}
Every imposter with $\mathrm{TV}(P,D_n)\ge\varepsilon$ satisfies $\beta(P)\ge\varepsilon/2$, or $\mathrm{TV}(P_z,U_{n-1})\ge\varepsilon/2$: the two cases of Eq.~\eqref{eq:split}, the first decided by Test~1 alone and the second the residual that the rest of the paper locates.
\end{corollary}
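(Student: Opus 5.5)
The corollary follows from Lemma~\ref{lem:tvdec} by contraposition. Suppose $P$ is any distribution on $\{0,1\}^n$ with $\mathrm{TV}(P,D_n)\ge\varepsilon$, and suppose that both alternatives fail, so that $\beta(P)<\varepsilon/2$ and $\mathrm{TV}(P_z,U_{n-1})<\varepsilon/2$. Inserting these into Eq.~\eqref{eq:split} gives
\begin{equation*}
\mathrm{TV}(P,D_n)\le\beta(P)+\mathrm{TV}(P_z,U_{n-1})<\varepsilon/2+\varepsilon/2=\varepsilon ,
\end{equation*}
which contradicts the hypothesis. Hence at least one of $\beta(P)\ge\varepsilon/2$ or $\mathrm{TV}(P_z,U_{n-1})\ge\varepsilon/2$ holds. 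The argument does not use NC$^0$-samplability, so the dichotomy holds for every distribution $P$ and in particular for every imposter.

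The only input is Lemma~\ref{lem:tvdec}. If I had to re-derive it, I would take the intermediate law $Q=P_z\otimes\delta_f$, meaning: draw $z\sim P_z$ and output $(z,f(z))$. Then $P$ and $Q$ share their $z$-marginal and differ only on the event $b\neq f(z)$, so $\mathrm{TV}(P,Q)\le\beta(P)$. Similarly, $Q$ and $D_n$ apply the same deterministic map $z\mapsto(z,f(z))$ to $P_z$ and $U_{n-1}$ respectively. By data processing this gives $\mathrm{TV}(Q,D_n)\le\mathrm{TV}(P_z,U_{n-1})$, with equality because the map is injective. The triangle inequality then gives Eq.~\eqref{eq:split}.

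The final clause of the statement, that the first case is decided by Test~1, is commentary rather than mathematics. It relies on the estimate $\hat\beta$ concentrating within $\delta$ of $\beta$ using $O(\delta^{-2}\log(1/\eta))$ samples, as recorded in Sec.~\ref{sec:prelim}. Because the rejection threshold is $\varepsilon/2-\delta$, every $P$ with $\beta(P)\ge\varepsilon/2$ is rejected with probability at least $1-\eta$.

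I do not expect any step to be an obstacle. The one point needing care is the strictness of the inequalities: the case split must use $\ge\varepsilon/2$ in both alternatives, so that their joint negation yields a strict bound below $\varepsilon$, which the hypothesis $\mathrm{TV}\ge\varepsilon$ rules out.
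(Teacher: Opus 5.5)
Your proof is correct and follows the paper's route. The corollary is Eq.~\eqref{eq:split} read contrapositively, and your re-derivation of Lemma~\ref{lem:tvdec}, inserting the intermediate law $P_z\otimes\delta_f$ and applying the triangle inequality, is the paper's own argument (App.~A); both of your intermediate bounds are in fact equalities, $\mathrm{TV}(P,P_z\otimes\delta_f)=\beta(P)$ and $\mathrm{TV}(P_z\otimes\delta_f,D_n)=\mathrm{TV}(P_z,U_{n-1})$, though the argument needs only the inequalities.
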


\begin{corollary}[Soundness border]\label{cor:soundborder}
The verifier is sound against every NC$^0$-samplable imposter whose $z$-marginal is either within $\varepsilon/2$ of uniform or seed-limited. Such an imposter with $\mathrm{TV}(P,D_n)\ge\varepsilon$ has, by Cor.~\ref{cor:dich}, $\beta(P)\ge\varepsilon/2$. Test~1 rejects it at $O(\delta^{-2}\log(1/\eta))$ samples, its level one tolerance below the dichotomy's bound. Otherwise $\beta<\varepsilon/2$, the dichotomy gives $\mathrm{TV}(P_z,U_{n-1})>\varepsilon/2$, and a sampler of $z$-support at most the budget's own $M-T$, such as a $\le\log_2(M-T)$-seed NC$^0$ sampler, is caught deterministically by door (2d) at $S_z+T\le M$, $S_z$ the sampler's $z$-support and $T$ the door's threshold. Soundness therefore fails only on the residual case: a $z$-marginal far from uniform yet spread enough that the anti-concentration door's collision test stays near-uniform. That case is occupied by the paper's searches, and what occupies it is caught, though not by the four symmetric doors.
\end{corollary}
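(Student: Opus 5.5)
The plan is to combine Corollary~\ref{cor:dich} with two separate rejection mechanisms, one per branch of the dichotomy. Fix an NC$^0$-samplable imposter $P$ with $\mathrm{TV}(P,D_n)\ge\varepsilon$ whose $z$-marginal is within $\varepsilon/2$ of uniform or is seed-limited. I split on $\beta(P)$, not on the marginal. Either $\beta(P)\ge\varepsilon/2$, or $\beta(P)<\varepsilon/2$, in which case Lemma~\ref{lem:tvdec} gives $\mathrm{TV}(P_z,U_{n-1})\ge\mathrm{TV}(P,D_n)-\beta(P)>\varepsilon/2$.

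\textbf{Case $\beta(P)\ge\varepsilon/2$.} I invoke Test~1. By Hoeffding, with $M=O(\delta^{-2}\log(1/\eta))$ samples the estimator satisfies $|\hat\beta-\beta(P)|<\delta$ with probability at least $1-\eta$. Then $\hat\beta>\varepsilon/2-\delta$, which is exactly Test~1's rejection level $13/72$ at the operating point. This branch needs no structure of $P$. It also covers every imposter whose marginal is within $\varepsilon/2$ of uniform, because for such an imposter the second branch is excluded by the strict inequality above.

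\textbf{Case $\beta(P)<\varepsilon/2$.} Here the hypothesis leaves only the seed-limited alternative. I take its quantitative form: the $z$-support $S_z$ satisfies $S_z\le M-T$. For example, an NC$^0$ sampler with at most $\log_2(M-T)$ seeds has $S_z\le 2^{\#\text{seeds}}\le M-T$.

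Door (2d) is a collision/anti-concentration count. It rejects once the number of repeated $z$-strings among the $M$ samples exceeds the threshold $T$. By pigeonhole, $M$ samples drawn from a support of size $S_z$ contain at least $M-S_z\ge T$ repeats. So the door fires deterministically, with no probability loss. For completeness, I check that $D_n$'s uniform marginal on $2^{n-1}$ strings yields fewer than $T$ collisions with probability at least $1-\eta$ at polynomial $M$. This follows from the birthday bound $\mathbb E[\text{collisions}]\le M^2/2^{n}$ and Markov's inequality.

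Finally I combine the two cases with a union bound over the two tests, allocating error $\eta/2$ to each, so rejection holds with probability at least $1-\eta$. The main obstacle is making the door~(2d) threshold consistent on both sides. $T$ must be large enough that the honest uniform marginal rarely reaches it, yet at most $M-S_z$. This forces the seed-limited class to be defined relative to the budget $M$, not intrinsically. I would state the corollary with that dependence explicit, fixing $T$ first from completeness and then reading off the admissible support bound $S_z\le M-T$.
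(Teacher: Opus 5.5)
Your proposal is correct and follows the paper's own argument. The chain rule splits on $\beta$. Test~1 with a Hoeffding budget handles $\beta\ge\varepsilon/2$, which automatically covers every marginal within $\varepsilon/2$ of uniform. In the branch $\beta<\varepsilon/2$, the pigeonhole bound $D\le S_z\le M-T$ makes door (2d) fire deterministically.

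The only slips are cosmetic. The union bound over tests is what completeness needs, since the honest law must pass every door; for soundness a single firing test suffices. The statement's closing sentences about the residual case are descriptive claims resting on the constructions of App.~C, not on this argument.
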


The seed-gap ladder of App.~C is graded by seed count; its $n_s=12$ rung, twelve seed bits and $g=16$ further $z$-bits computed from them, carries $\beta=639/4096=0.1560\le\varepsilon/2-2\delta=25/144$, so Test~1 accepts it. Its residue TV $0.0261$ and weight-histogram TV $0.0378$ lie below the two thresholds and its rank is $28$ of $28$, so the three structure doors leave it clean. Its $z$-support $4096$ inside the rung's $28$ $z$-bits puts $\mathrm{TV}(P_z,U_{28})\ge1-2^{-16}$, hence $\mathrm{TV}(P,D_n)\ge\varepsilon$ by projection.

Two facts close it, both in App.~C. Door (2d) rejects it deterministically on the shared stream that Test~1 needs anyway, and reaches the ladder's deeper rungs, whose support outgrows that stream, at a budget of order their own support; and independently of any budget the rung reads twelve of its twenty-eight output bits through bare \textsc{and} and \textsc{or} gates, so those bits carry single-bit biases, their probabilities of a one, of $1/8$, $1/4$, $3/4$ or $7/8$ rather than $1/2$, which door (2e), reading coordinates the four doors ignore, rejects from three hundred samples. That catch is not the rung's alone: over the ladder's fifty-one endpoints and their four hundred and twenty-one gate lists, every one carries a single-bit correlation of at least $1/2$, so door (2e) rejects each without any budget at all, App.~C. Directly measuring $\mathrm{TV}(P_z,U_{n-1})$ still costs $\Theta(2^{(n-1)/2}/\varepsilon^2)$, Open Problem~\ref{open:dist}; closing this class is the low-degree door rather than a budget.

\begin{figure}[!htbp]
\centering\includegraphics[width=\linewidth]{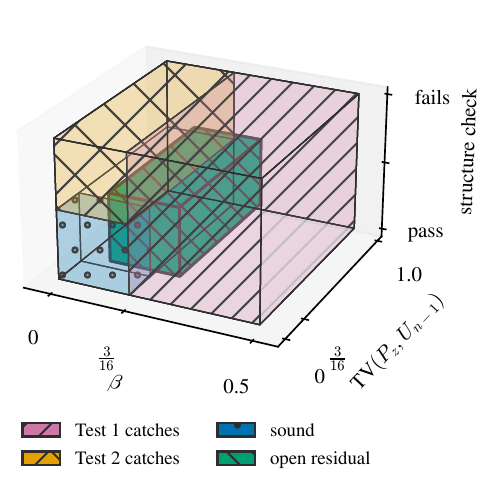}
\caption{The verifier as a map of where an imposter lands. Three coordinates place it: how often its answer is wrong, $\beta=\Pr[b\ne f(z)]$; how far its string half is from uniform, $\mathrm{TV}(P_z,U_{n-1})$; and whether a structure check fires, a schematic coordinate that orders the regions and carries no scale. Here $D_n=(X,f(X))$ is the target law, $X$ uniform on the string cube and $f$ the tag of Fig.~\ref{fig:residue}, $P$ the imposter's law, $P_z$ its string half and $U_{n-1}$ the uniform law. Test~1 rejects the wrong-answer region above its level $13/72$; Test~2 rejects the wrong-shape region; and a device whose string half is near uniform passes. The heavy-edged region, few wrong answers together with a string half far from uniform and the shape checks passing, is what the verifier cannot yet close, the open residual of Open Problem~\ref{open:dist}. Two caveats on the axes: the $\beta$ axis is truncated at $0.5$ and its $3/16$ tick is the dichotomy bound $\varepsilon/2$ rather than Test~1's level, and the $3/16$ tick on the distance axis is that same number read as a distance, the soundness border.}
\label{fig:verifier}
\end{figure}

Together they give a two-test verifier, shown as the decision map of Fig.~\ref{fig:verifier}:
\textbf{Test 1, the $\beta$-test, proven.} Estimate $\beta$ from $M=O(\delta^{-2}\log(1/\eta))$ samples at estimation tolerance $\delta$; reject if $\hat\beta>\varepsilon/2-\delta$, so that a sampler with $\beta\ge\varepsilon/2$, the dichotomy's own bound, is caught with probability at least $1-\eta$. This is imposter-agnostic: it recomputes the fixed function $f$ and posits nothing about the imposter; it catches the first case of the dichotomy, the Test-1 rejection region of Fig.~\ref{fig:verifier}.

\textbf{Test 2, the $z$-marginal structure test, candidate.} On the residual case $\beta<\varepsilon/2$, run poly-sample tests, each agnostic in its definition, a \emph{door} being one rejection criterion a sampler must pass to be accepted. The structure test is the union of five such doors, (2a) residue-uniformity, (2b) $\mathbb F_2$-rank, (2c) weight-histogram, (2d) anti-concentration, and (2e) low-degree correlators; Sec.~\ref{sec:perfixer} explains which constructions force each, and the open question is whether the union covers every residual imposter. Four of the five are \emph{symmetric} statistics of the $z$-marginal, invariant under permuting the $z$-coordinates, and those four are provably not sufficient on their own, Sec.~\ref{sec:dichotomy} below and App.~C; the fifth is not a symmetric statistic of the marginal but a coordinate-indexed one, and what it reads lies on the class's dyadic lattice.

\emph{(2a) Residue uniformity:} estimate the law of $|z|\bmod 2p$ and reject if it is $\tau$-far from uniform in total variation, at the optimal uniformity-testing rate for that alphabet \cite{Paninski2008,Canonne2022}, quoted in App.~D.

The threshold must clear the device's own deviation: by data processing the device's residue marginal deviates by at most $\gamma+o(1)$ in the worst case of the device-distance class, and the window $(\gamma,\varepsilon/2)$, nonempty since $\varepsilon>2\gamma$, separates that worst case from the idealized imposters of Prop.~\ref{prop:mixture}, which force $\varepsilon/2-o(1)$. We take $\tau=7/40$ at the operating point, and exact uniformity of the device's $z$-marginal, which would make any $\tau>0$ feasible at the specific-device promise, is not achieved: the unitary realization perturbs the ideal marginal by $O(n^{-c})$ in total variation, so the data-processing bound is the correct statement, while the exact target's own residue marginal deviates by only the character decay of Prop.~\ref{prop:decorr}, $0.050$ at the numerical depth $n-1=17$ and $1.6\times10^{-3}$ at the operating one, both far below $\tau$.

\emph{(2b) $\mathbb F_2$-rank:} from $M$ samples form the $(n-1)\times M$ binary matrix of $z$-bits and reject if its rank over $\mathbb F_2$, the two-element field whose addition is XOR, is below $n-1$, the rank computed by mod-2 Gaussian elimination. Under $D_n$ the deficiency probability is $\le\eta$, so $M$ must exceed $n-1$ by a logarithmic margin and no more, App.~C. This completeness statement is for $D_n$ itself, the rank door making no promise about the worst case of the device-distance class, while the analyzed circuit's $O(n^{-c})$-near-uniform marginal passes it with probability $1-\eta-O(n^{-c})$.

\emph{(2c) Weight-histogram:} compare the Hamming-weight law of $z$ with $\mathrm{Binomial}(n-1,\tfrac12)$, the exact law under $D_n$, and reject if it is $\tau'$-far in total variation. The residue is a function of the weight, so up to a binomial correction this door separates every distribution the residue door separates, and weight-distorted distributions besides, those whose weight law departs from the binomial; the correction is the distance from uniform of $\mathrm{Binomial}(n-1,\tfrac12)$ modulo $2p$, $1.6\times10^{-3}$ at $n-1=42$, against a door separation of $\tau-\tau'=0.115$, so the domination holds where the verifier is used. Its threshold stands below the smallest weight-law distance over the surviving family, and it is not an anti-concentration barrier against over-dispersion, App.~C.

The composite completeness statement is thereby: Test~1 and Test~2a accept every distribution within $\gamma$ of $D_n$, the analyzed circuit among them, while Test~2b's completeness is anchored to the circuit's marginal rather than the whole $\gamma$-class, so ``an honest device passes'' inherits the circuit promise rather than the class worst case. The doors are complementary: a tiny-support sampler with $z\in\{0,e_1,\dots,e_{n-1}\}$ has $\beta=0$ and full rank but concentrates the residue and dies on (2a), its weight law, concentrated at weights $\{0,1\}$, being likewise far from $\mathrm{Binomial}(n-1,\tfrac12)$ and caught on (2c); the $2p$-block and escape samplers of Sec.~\ref{sec:block}, by contrast, have uniform-enough or concentrated residues at deficient rank and die on (2b).

\emph{(2d) Anti-concentration, the $z$-support deficit:} from $M$ samples count $D$, the number of distinct $z$ values, and reject if $D\le M-T$. A law with $z$-support at most $S_z$ has $D\le S_z$, so once $S_z+T\le M$ the door rejects it deterministically, and the threshold is a free budget parameter: any $T$ above the near-uniform collision count $M^{2}/2^{n-1}$ preserves completeness, and the operating point takes $T=M/2$, which clears that count at the operating depth while at the small depths of App.~C the door is run with the threshold its own null requires. The same rejection reads as a threshold on the collision-pair count $\sum_{i<j}[z_i=z_j]$, which is at least $T$ whenever the door fires. The statistic is the classical collision count of uniformity testing, so the door's price is set by the alphabet rather than by the statistic, and its completeness holds at every depth where $M\ll2^{(n-1)/2}$, the operating point among them. Door (2d) is the cheap complement the structure doors lack: it reads the $z$-support directly, where doors (2a)--(2c) read only proxies, and the catch side is machine-checked.

\emph{(2e) Low-degree correlators:} from $M$ samples estimate the single-bit correlations $\mathbb E(-1)^{z_i}$ and the two-point correlations $\mathbb E(-1)^{z_i+z_j}$, $i\ne j$, of the $z$-coordinates, and reject if any estimated coefficient exceeds a threshold $\theta_{\mathrm{c}}$ placed above the uniform null's sampled upper quantile. The door is \emph{exact} on the constant-locality class, for a reason no symmetric statistic of the marginal has: the quantity it reads is quantized there. A $z$-coordinate of a $d$-local sampler is a Boolean function of at most $d$ uniformly distributed seed bits, so its single-bit correlation is one plus an integer multiple of $2^{-(d-1)}$, and the two-point correlation of a pair is likewise one plus an integer multiple of $2^{-(2d-1)}$. Every coefficient on the class is therefore \emph{dyadic}, and the lattice spacing is a property of the class rather than of the budget: at $d\le5$ the smallest nonzero single-bit correlation is $2^{-4}=1/16$, and the two-point lattice is finer, $2^{-(2d-1)}$, $2^{-5}$ at $d=3$ and $2^{-7}$ at $d=4$, so the spacing the door must resolve shrinks with the locality. The door decides by lattice membership wherever its threshold sits below that spacing and the budget resolves it, which is what no statistic of the marginal can do, and the counting identity behind the lattice is machine-checked. 

An imposter that is not of constant locality has no reason to be dyadic, and a non-dyadic product target is provably hard to sample even for bounded-degree polynomial sources with unbounded seed, where the overlap with such a target decays exponentially \cite{KhodabandehShinkar2026,Zhong2026}; one that is must exhibit one of the finitely many dyadic values, which the door distinguishes from zero at a sample cost polynomial in the system size, wherever the budget resolves the spacing. Under $D_n$ every one of these coefficients vanishes exactly, the $z$-coordinates being uniform and independent, so the door consumes no completeness budget beyond its own sampling error.

The door is the one the four symmetric statistics cannot replace: a sampler can be tuned to sit just under any threshold on an approximate statistic, while either the coefficient lies on the correlation lattice or the sampler is outside the class.

Door (2d)'s reach is a budget rather than a class bound. The passers of App.~C are seed-limited at their scans' seed budget, so it rejects each deterministically there, at a budget far below the $\Theta(2^{(n-1)/2}/\varepsilon^{2})\approx1.5\times10^{7}$ identity-testing cost at the operating depth $n-1=42$, Open Problem~\ref{open:dist}. What it does not catch is a many-seed imposter whose large-support $z$-marginal is non-uniform yet spread, with collision excess below the threshold, and that case is the residual below. This is not the failure Sec.~\ref{sec:block} records: there a \emph{support-} or $\beta$\emph{-only} argument breaks, whereas door (2d) is the support-deficit test of the \emph{three-door passers}, the $\varepsilon$-far samplers with good residue, full rank and binomial weight law but tiny support that (2a)--(2c) miss entirely.

\paragraph{The four symmetric doors are not sufficient.} At the thresholds stated above, doors (2a)--(2d) together are refuted by an explicit construction, App.~C. Take $601$ $z$-bits. Sixteen of its $272$ seeds enter the weight with increment $6\equiv0\pmod{2p}$ and so are invisible to the tag. The tag itself is a fixed Boolean function of five further seeds, four of them used as unit increments of the weight and one entering three times, with two control bits making every coordinate a four-way multiplexer, so each output reads at most six seeds. The resulting sampler has $\beta=0$ \emph{exactly}, by construction, and its $z$-marginal, the image of $272$ random bits, sits on at most $2^{272}$ of the $2^{601}$ strings, at total variation at least $1-2^{-329}$ from uniform: it is $\varepsilon$-far from $D_n$. Its residue TV is $0.0417$, below $\tau=7/40$; its weight-histogram TV is $0.0529$, below $\tau'=0.06$; its $z$-sample matrix has $\mathbb F_2$-rank $601$ of $601$; and its $z$-support is at least $2^{270}$, two of the $272$ seeds being redundant, so the anti-concentration door's collision statistic is at most $4\cdot2^{-270}$ and no polynomial budget reaches it. The construction therefore clears all four symmetric doors at every polynomial budget, and it is no corner artifact: it clears them at two tag read sizes, five and seven.

The construction falls at the low-degree door and nowhere else, its $1945$ nonzero two-point correlations reading $0.250$ to $0.500$, on the class's lattice at the coarse spacings its four-way multiplexers produce, against the uniform null's largest such coefficient of $0.079$ at $M=4000$ samples. Door (2e) at threshold $1/8$ rejects it in $20$ of $20$ runs at $M=2000$ and again at $M=4000$.

\paragraph{The thresholds are calibrated by the null.} Two of the five doors carry a total-variation threshold rather than a structural criterion, and both are set by measurement rather than by formula. The residue door's $\tau$ has a second role, the refutation of the device promise, and stays at $7/40$, above the class radius $\gamma$; the weight-histogram door's $\tau'$ carries no promise and is calibrated by the null's upper sampling quantile, which is computable because the null's weight law is $\mathrm{Binomial}(n-1,\tfrac12)$ and its residue law exactly uniform. The stated $\tau'=0.06$ therefore accepts weight-law distortions several times the sampling noise of its own null, while $\tau=7/40$ sits three times above the residue deviation at the tightest numerical depth and a hundred times above it at the operating one, both from the character decay of Prop.~\ref{prop:decorr}. The low-degree door's threshold is calibrated by the null as well, but the quantity it thresholds takes dyadic values on the class. The calibration at each depth, and its depth dependence, are App.~C.

\paragraph{What the fifth door proves, not only what it rejects: rigidity on the bounded pinned-residue class.}
What the door proves stands on its own. Call the string half \emph{pairwise uniform} if $\mathbb E(-1)^{z_a}=0$ for every output $a$ and $\mathbb E(-1)^{z_a+z_b}=0$ for $a\ne b$; under $D_n$ the $z$-coordinates are independent and balanced, so the target has the property, and a sampler that has it is one whose single-bit and two-point correlators all vanish.

\emph{Residue rigidity.} Let the string half be pairwise uniform on $\ell$ fair seed bits, and let $|z|\bmod p$ be a function of the seeds in a set $K$ of size $k$. For an output $a$ and an assignment $\sigma$ of those seeds put $A_a(\sigma)=\sum_{\omega:\,s_K(\omega)=\sigma}(-1)^{z_a(\omega)}$. Then for every $a$,
\begin{equation}\label{eq:rigidity}
\sum_\sigma r(\sigma)\,A_a(\sigma)\;\equiv\;-2^{\ell-1}\pmod p ,
\end{equation}
where $r(\sigma)$ is the residue at $\sigma$, so $A_a\not\equiv0$: every output bit is biased at some fixing of $s_K$. The proof is three lines. Write $Y=\sum_b(-1)^{z_b}=N-2|z|$. Pairwise uniformity kills the first moment of every output and the cross terms of its second, so $\sum_\omega Y\,(-1)^{z_a}=2^\ell$; writing $|z|\bmod p=r(s_K)$ splits the other side into a sum over the $2^k$ assignments, $p$ is odd, and $p\nmid2^{\ell-1}$, so the sum cannot vanish. With $K$ empty this already says that a pairwise-uniform string half never has $|z|\bmod p$ constant.

\emph{Hubs are forced.} Two outputs that share no seed outside $K$ are independent once $s_K$ is fixed, so the inner product of their vectors is the mean of $(-1)^{z_a+z_b}$ -- the independence gives the factorization and pairwise uniformity gives the vanishing, and both halves of the hypothesis are used -- whence $A_a$ and $A_b$ are orthogonal; the congruence makes each nonzero, and balance makes each orthogonal to the all-ones vector too. Nonzero mutually orthogonal vectors in $\mathbb R^{2^k}$ that all lie orthogonal to a nonzero vector are at most $2^k-1$ in number, so among the sampler's $N$ outputs, its output bits, at most $2^k-1$ have pairwise disjoint seed sets outside $K$. A $d$-local sampler therefore has some seed outside $K$ feeding at least $\bigl(N/(2^k-1)-1\bigr)/d$ outputs, and adjoining that seed to $K$ halves the bound, so the $j$-th hub feeds at least $\bigl(N/(2^{k+j}-1)-1\bigr)/d$ outputs and the hierarchy ends after about $\log_2(N/d)-k$ of them. This is the mechanism behind the signature the measurements report: the samplers that reach the residual route most of their outputs through the read's seeds, and every escape the searches built uses a fixing bit that gates many columns.

\emph{Many residues.} If instead the string half is uniform on every $2t$ of its bits, with $N\ge t$ and $t\le p-1$, then $|z|\bmod p$ takes at least $t+1$ values, and at $t=p-1$ all $p$ residues occur. The proof multiplies the vanishing polynomial $\prod_{\rho\in\Gamma}(|z|-\rho)$ of a putative $t$-element residue set $\Gamma$ by a character of degree $t$: the product has degree at most $2t$, so its mean is the coefficient of the empty monomial, and that coefficient is $\pm2^{-t}t!$ exactly, which an odd prime above $t$ cannot divide.

Together the three give a class on which Test~1 and door (2e) are \emph{sound}: against a $d$-local sampler bounded outside the tag's read in fan-out or in the seeds that more than one output reads, in the regime where $|z|\bmod p$ is pinned by a bounded seed set, the constant-locality read cannot meet the hub requirement. 

Three limitations are as tight as the lemmas' hypotheses. The lemmas are exact, so one seed on which the residue is wrong breaks the congruence, and they do not bound $\beta$ away from zero. They need $|z|\bmod p$ pinned rather than left free on the half-slice, the strings whose residue falls in one of the two level sets of $F$.  The slice leaves $|z|\bmod p$ free and ties it to the parity of $|z|$, a degree-$N$ statistic that a local sampler controls at no cost. And the hub conclusion is an existence statement about fan-out, contradicted only by such a bound: constant locality per output imposes neither, since one seed may feed every output while each still reads at most $d$ seeds. The lemmas therefore confine a class strictly smaller than the NC$^0$ class the verifier is defined against, and whether constant locality alone suffices is open. Carrying the congruence across fixings, where pairwise uniformity need not hold, is the robust problem.

\paragraph{Why the rank door is the right Test 2.}
The choice is a design heuristic. $\beta$ small forces the local output bit $g(r)$ to approximate the target tag at its own string half; since $f$ is globally sensitive, with total influence, the sum of the per-bit influences, equal to $\Theta(n)$ by Lemma~\ref{lem:struct}, and $g$ has bounded influence, the composition must collapse $\Theta(n)$ influence into $O(1)$. This holds only if the $z_i$ are strongly correlated so that their sum mod $2p$ is locally determined, pushing the $z$-image toward a low-$\mathbb F_2$-dimensional affine set, a rank-deficient sample matrix; making that implication precise is exactly Open Problem~\ref{open:dist}.

Viewed physically, the two doors read one constraint from two sides: the residue is the window through which local randomness leaks into the global statistic, the weight law, while the rank measures how much linear structure survives in the string half. The constructions of App.~C populate both ends of that trade-off, and none has reached the corner where low violation, full rank and an undisturbed weight law hold together.

Bounded-depth statistics are a priori insufficient: $\mathrm{AC}^0$, the constant-depth circuits of unbounded fan-in, is fooled by polylog-wise independence \cite{Braverman2010}, while the $\mathbb F_2$-rank of the sample matrix is not bounded-depth-computable, and identity against polynomial-size circuit statistics is testable at polynomially many samples \cite{BunDesaiPinto2026}. Finer statistics than the residue exist; the conjecture below is stated for the coarsest pair the constructions already defeat.

The constructed imposters confirm the pattern, their ranks tabulated in App.~C: each is rank-deficient, while in the scan of Fig.~\ref{fig:influence} no random local sampler reaches a small measured pool-violation, so the correlation the pattern names is what separates them.

The natural converse, that low $\beta$ forces a structure door, is false: a $\le3$-local sampler at $n-1=17$ has $\beta=0.1397$, full $\mathbb F_2$-rank, and a binomial weight law yet $z$-support $682$ of $2^{17}$, so $\mathrm{TV}(P_z,U_{n-1})=0.995$ at the soundness border of Cor.~\ref{cor:soundborder}. No door bridges the chain-rule gap of Eq.~\eqref{eq:chain}; the anti-concentration door catches the seed-limited passers on the shared stream of Sec.~\ref{sec:dichotomy}. That door is a budget rather than a class bound only above the supports of App.~C. The residual is the $z$-marginal's support-versus-uniform separation, and it has the small-support form that door (2d) recovers and the large-support form that Open Problem~\ref{open:dist} leaves open.

\paragraph{The verifier, assembled.} The decision rule is this: reject if $\hat\beta>\varepsilon/2-\delta$, or the residue marginal is $\tau$-far from uniform, or the sample rank is deficient, or the weight law is $\tau'$-far from the binomial, or the distinct-$z$ count is at most $M-T$, the anti-concentration door, or any single-bit or two-point correlation of the $z$-coordinates exceeds $\theta_{\mathrm{c}}$, the low-degree door. The tolerance $\delta$ is the estimation slack, so a sampler with $\beta\ge\varepsilon/2$ is rejected by Test~1 with probability at least $1-\eta$, and the rejection legs share the false-rejection budget.

The doors differ in what a rejection means, and the difference is the point. A Test-1 or residue rejection refutes the device promise $\mathrm{TV}(P,D_n)\le\gamma$ unconditionally, the residue threshold $\tau=7/40$ sitting above the class radius $\gamma$, whereas a rank, weight-histogram, or anti-concentration rejection is a structural diagnostic: the rank door makes no promise about the device-distance class, the histogram threshold $\tau'=0.06$ lies below $\gamma$ so a device inside the promised class may trip it, and door (2d) certifies only the seed-limited class, which it rejects deterministically. Door (2e) sits apart from both, a criterion exact on the constant-locality class rather than a diagnostic, since it reads a lattice that no symmetric statistic of the marginal reads.

The completeness side is quantitative at the operating point: the analyzed circuit passes with probability $1-\eta-O(n^{-c})\approx0.96$, Test~1's rejection level clearing the circuit's violation rate by about nine standard deviations at its budget, while the structure and anti-concentration doors reject it with probability at most $\eta+O(n^{-c})$ from its $O(n^{-c})$-near-uniform marginal. Test~1 and the residue door carry the whole device-distance class, the rank and histogram doors being anchored to the analyzed circuit. The operating point's symbol table, the per-door sample budgets and tolerant rates, and the standard-error arithmetic behind these numbers are App.~D.

\begin{figure}[!htbp]
\centering\includegraphics[width=\linewidth]{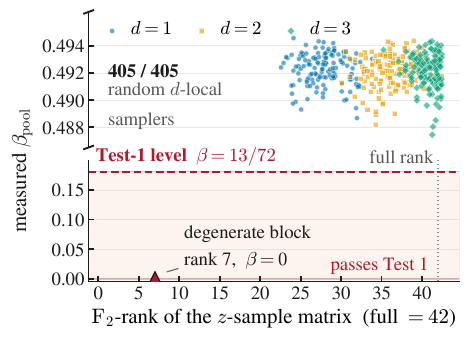}
\caption{On this class a low violation rate goes with rank deficiency. Each point is one of $405$ random $d$-local samplers, $d\in\{1,2,3\}$, at $42$ $z$-bits and $p=3$, each output bit an XOR of $d$ seed bits, each locality drawn with its own color and marker shape, plus the $2p$-block sampler, the construction of Sec.~\ref{sec:block}, as a triangle. The horizontal axis is the $\mathbb F_2$-rank of the sampler's $z$-sample matrix; the vertical axis is its violation rate against the best member of the $172$-predictor pool of App.~C, a cheap stand-in for the true $\beta$ whose members are the constants, the single seeds and their pairwise XORs over the first $18$ seeds, rather than against the target tag $f$. Random samplers sit at ranks $23$--$42$ with $\beta\ge0.487$; the only $\beta\approx0$ point is the degenerate block at rank $7$; and no full-rank sampler reaches $\beta<0.2$, the empty corner. The vertical axis is broken between $0.20$ and $0.4865$ and no point lies in the omitted band. The dashed line is Test~1's rejection level $13/72$.}
\label{fig:influence}
\end{figure}

The residual of the completeness question has two analytic components, both proven in Lean, Sec.~\ref{sec:lean}. Characters, the multiplicative Fourier modes $\omega^{kW}$ of the residue circle, are the right tool for the first because their expectations factor over independent blocks, which is what the fixing decomposition exploits. Throughout the sequel, a \emph{half-slice} is the set $H_c=\{z:F(|z|\bmod 2p)=c\}$ of strings whose residue falls in one of the two $F$-level sets, half of the $2p$ residues.

\begin{proposition}[Character-sum decorrelation, tensorized]\label{prop:decorr}
Let the block weights of a sampler's $z$-marginal be independent across blocks, as holds conditionally on a fixing of the live set of Sec.~\ref{sec:perfixer}, block $b$ contributing $W_b$ taking values in $A_b$. Write $\omega=e^{i\pi/p}$ for the primitive $2p$-th root of unity and $q_k=\mathbb E\,\omega^{kW}$ for the $k$-th Fourier mode of the residue marginal $W=|z|\bmod 2p$ on $\mathbb Z/2p$, the residue of a sample being its Hamming weight modulo $2p$; $\omega^{kW}$ is the $k$-th character of the cyclic group $\mathbb Z/2p$. Then the character factorizes: $q_k=\prod_b q_{k,b}$ with $q_{k,b}=\mathbb E\,[\omega^{kW_b}]$, and $|q_k|=\prod_b|q_{k,b}|$. Two decay bounds hold: \textup{(i)} if the weight law of $W_b$ lies within total variation $\delta_b$ of the adjacent pair $\{j_b,j_b{+}1\}$, then $|q_{k,b}|\le|\cos(\pi k/2p)|+2\delta_b$, since $|\omega^{kx}|=1$ caps the difference of the two expectations by twice their total variation; \textup{(ii)} if $W_b$ is uniform on the pair, $|q_{k,b}|=|\cos(\pi k/2p)|$ exactly. In particular $G$ independent balanced blocks give $|q_k|\le\cos(\pi/2p)^G=\exp(-\Theta(n/p^3))=o(1)$ at every $k\not\equiv0\pmod{2p}$, for $G=\Theta(n/p)$ and $p=o(n^{1/3})$, so the residue marginal is exponentially close to uniform on $\mathbb Z/2p$. A sampler supported on a half-slice instead concentrates its residue marginal on the $p$ residues of $F^{-1}(c)$, at total variation at least $1/2$ from uniform, and the two regimes are disjoint.
\end{proposition}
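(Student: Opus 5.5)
The plan is to prove the proposition in four pieces: factorization, the two per-block bounds, the tensorized decay, and the half-slice separation.

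\emph{Factorization.} Since $W\equiv\sum_b W_b \pmod{2p}$ and $\omega^{2p}=1$, the character is well defined on residues and $\omega^{kW}=\prod_b\omega^{kW_b}$. Independence of the block weights, which holds conditionally on a fixing as in Sec.~\ref{sec:perfixer}, lets the expectation of the product split into the product of expectations. This gives $q_k=\prod_b q_{k,b}$, and taking moduli gives the stated identity.

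\emph{Per-block bounds.} For (ii), suppose $W_b$ is uniform on $\{j,j+1\}$. Then
\[
q_{k,b}=\omega^{kj}\,\tfrac12\bigl(1+\omega^k\bigr)=\omega^{kj}\,\omega^{k/2}\cos(\pi k/2p),
\]
so $|q_{k,b}|=|\cos(\pi k/2p)|$. For (i), let $\mu$ be the uniform law on the pair and $\nu$ the law of $W_b$. For any function $g$ with $|g|\le1$ we have $|\mathbb E_\nu g-\mathbb E_\mu g|\le\sum_x|\nu(x)-\mu(x)|=2\,\mathrm{TV}(\nu,\mu)$. Applying this to $g(x)=\omega^{kx}$ and using the triangle inequality with (ii) gives (i). I read ``within total variation $\delta_b$ of the adjacent pair'' as distance $\delta_b$ from the uniform law on that pair; this is the reading (ii) requires.

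\emph{Tensorized decay.} Fix $k\not\equiv0\pmod{2p}$ and set $k'$ to be $k$ reduced to $(-p,p]$. Then $|\cos(\pi k/2p)|=|\cos(\pi k'/2p)|$, and the maximum over nonzero $k'$ is attained at $k'=\pm1$ (the value $k'=p$ gives $\cos(\pi/2)=0$). So every nonzero mode is bounded by $\cos(\pi/2p)^G$. Using $\cos x\le e^{-x^2/2}$ on $[0,\pi/2]$,
\[
|q_k|\le \exp\!\bigl(-G\pi^2/8p^2\bigr)=\exp\!\bigl(-\Theta(n/p^3)\bigr)
\]
when $G=\Theta(n/p)$, and this is $o(1)$ exactly when $p=o(n^{1/3})$. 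To pass to total variation I will use Fourier inversion on $\mathbb Z/2p$ together with Cauchy--Schwarz/Parseval:
\[
\mathrm{TV}(W,U_{2p})\le \tfrac12\Bigl(\textstyle\sum_{k\ne0}|q_k|^2\Bigr)^{1/2}\le\tfrac12\sqrt{2p}\,\cos(\pi/2p)^G ,
\]
which is still exponentially small because $\sqrt{2p}$ is only polynomial.

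\emph{Half-slice and disjointness.} Any sampler supported on $H_c$ puts all its residue mass on $F^{-1}(c)$. This set has exactly $p$ of the $2p$ residues: a residue is determined by its parity and its class mod $p$, and for each of the $p$ classes mod $p$ exactly one parity gives $F=c$. The uniform law gives that set mass $1/2$, so the event $F^{-1}(c)$ alone shows $\mathrm{TV}\ge1/2$. Disjointness then follows because the decorrelated regime has $\mathrm{TV}=o(1)<1/2$ for large $n$.

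\emph{Main obstacle.} The analytic content is routine; the delicate step is the hypothesis bookkeeping. Case (i) needs a uniform bound $\delta_b\le\delta$ with $\cos(\pi/2p)+2\delta<1$ across blocks. Without it, the product gives decay $(\cos(\pi/2p)+2\delta)^G$ only if $\delta=o(1/p^2)$, since $1-\cos(\pi/2p)=\Theta(1/p^2)$. I would therefore state the exponential conclusion only under (ii) (balanced blocks) or under $\delta_b\le c/p^2$ with a small constant $c$.
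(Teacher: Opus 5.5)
Your proposal is correct and follows essentially the paper's own argument. You factor $\omega^{kW}$ over the independent blocks, compute $\tfrac12(1+\omega^k)$ directly for (ii), and cap unimodular test functions by $2\,\mathrm{TV}$ for (i), where your reading of the hypothesis as distance from the uniform law on the pair is the intended one. You then use the maximum of $|\cos(\pi k/2p)|$ at $k\equiv\pm1$, and the count of exactly $p$ residues in each fiber of $F$ for the half-slice bound. The only variation is your Parseval step to total variation, $\tfrac12\sqrt{2p}\cos(\pi/2p)^G$. It is a factor of order $\sqrt{p}$ sharper than the termwise $\ell^1$ bound $(2p-1)\cos(\pi/2p)^G$ that the paper later quotes from this proposition in Prop.~\ref{prop:mixture}, though either suffices.
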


The unnormalized character sums behind (i) and (ii), and their tensorization, are the machine-checked statements; the normalized per-block modulus divides out the factor $|A_b|$.

\begin{lemma}[Modular half-slice subspaces are proper]\label{lem:halfspan}
Every $\mathbb F_2$-subspace $V\subseteq\mathbb F_2^{n-1}$ contained in the half-slice $H_0=\{z:F(|z|\bmod 2p)=0\}$ is proper, $V\ne\mathbb F_2^{n-1}$: the full space contains the unit vector $e_i$, of weight 1, and $F(1)=1\ne0$, verified in Lean. Proper subspaces do lie inside $H_0$, but never as a hyperplane: the two weight-4 vectors with supports $\{0,1,2,3\}$ and $\{0,1,4,5\}$ span one, machine-checked, while no hyperplane does, since at $p=3$ every hyperplane contains a vector of weight $1$ or $2$ and $F(1)=F(2)=1$. But any sampler whose $z$-support spans a subspace inside a half-slice is then rank-deficient and caught by the rank door.
\end{lemma}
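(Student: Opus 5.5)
The plan is to verify each clause of the lemma directly from the residue table of $F$ in Lemma~\ref{lem:struct}, $F(r)=\mathbf 1[r\bmod p>p/2]\oplus(r\bmod 2)$, together with elementary linear algebra over $\mathbb F_2$. No analytic input is needed.

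\emph{Properness.} First I evaluate $F(1)$ for every odd prime $p\ge3$. Since $1\bmod p=1<p/2$, we get $\mathrm{majmod}_p=0$, and the parity of $1$ is $1$, so $F(1)=1$. The space $\mathbb F_2^{n-1}$ contains $e_1$, which has weight $1$, so $e_1\notin H_0$. Hence any subspace $V\subseteq H_0$ omits $e_1$ and is proper. I also note that $F(0)=0$, so $0\in H_0$ and the hypothesis is never vacuous.

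\emph{The $p=3$ example.} Let $u,v$ be the indicator vectors of $\{0,1,2,3\}$ and $\{0,1,4,5\}$. Their span is $\{0,u,v,u+v\}$, and $u+v$ has support $\{2,3,4,5\}$, so every nonzero element has weight $4$. At $p=3$ we have $4\bmod 3=1$, which is not $>3/2$, so $\mathrm{majmod}_3=0$; the parity of $4$ is $0$, so $F(4)=0$. Together with $F(0)=0$, the whole span lies in $H_0$. This is a nontrivial proper subspace, and it needs only $n-1\ge6$.

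\emph{No hyperplane at $p=3$.} A hyperplane is $\ker a$ for some nonzero $a\in\mathbb F_2^{n-1}$. I split into two cases.
\begin{itemize}
\item If some $a_i=0$, then $e_i\in\ker a$, which is a vector of weight $1$.
\item Otherwise $a$ is the all-ones vector, and $e_i+e_j\in\ker a$ for any $i\ne j$ (using $n-1\ge2$), which is a vector of weight $2$.
\end{itemize}
At $p=3$ we have $F(1)=1$ as above, and $F(2)=\mathbf 1[2>3/2]\oplus0=1$. So every hyperplane meets the complement of $H_0$.

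\emph{Rank-door consequence.} If the $z$-support of a sampler lies in a subspace $V\subseteq H_0$, then every column of the $(n-1)\times M$ sample matrix lies in $V$. The matrix rank is therefore at most $\dim V<n-1$ at every $M$, and door (2b) rejects deterministically.

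The only point needing care is that the hyperplane claim is genuinely $p=3$-specific. For $p\ge5$ we have $F(2)=0$, so it must not be stated in general. The main obstacle is thus only bookkeeping of which clauses hold for all $p$ and which hold at $p=3$ alone.
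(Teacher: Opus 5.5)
Your proposal is correct and follows the paper's own argument. Properness comes from the weight-one vector $e_i$ with $F(1)=1$. The subspace example is the span $\{0,u,v,u+v\}$ of weight-$4$ vectors with $F(4)=0$ at $p=3$. The hyperplane claim comes from the fact that every hyperplane $\ker a$ contains a weight-$1$ or weight-$2$ vector, with $F(1)=F(2)=1$. The rank-door consequence follows from rank $\le\dim V<n-1$. Beyond the paper's Lean-backed one-liners, you spell out the case split on whether $a$ is the all-ones functional and the size conditions $n-1\ge2$ and $n-1\ge6$.
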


With Prop.~\ref{prop:decorr} and Lemma~\ref{lem:halfspan} machine-checked, the formal residual of the completeness question is purely structural: either exhibit, for an arbitrary $\beta$-small NC$^0$ sampler, the independent block decomposition entering Prop.~\ref{prop:decorr}, for which Watts--Parham's own block machinery is the natural route \cite{BenewattsParham2026}, or show that its support lies in a proper $\mathbb F_2$-subspace, where Lemma~\ref{lem:halfspan} finishes. That analytic content carries no residual uncertainty.

\begin{open}[Residue, rank, weight-law, and $z$-marginal soundness]\label{open:dist}
Does a poly($n$)-sample test exist that accepts $D_n$ and rejects every NC$^0$-samplable $P$ with $\mathrm{TV}(P_z,U_{n-1})>\varepsilon/2$ and $\mathrm{TV}(P,D_n)\ge\varepsilon$? The question is the second case of the TV-chain-rule dichotomy, the $z$-marginal's support-versus-uniform separation, which the three structure doors measure only as proxies.

What occupies the question is known in some detail, and none of it is a counterexample. The passers a $\beta$-minimizing search finds are seed-limited: they hold all three structure doors while their $z$-marginals have support a small fraction of the cube, at total variation above $0.99$ from uniform, so door (2d) rejects each deterministically on the shared stream that Test~1 needs anyway. The realized set is not confined to shallow depths, an exact $1$-neighborhood ladder, a search that improves a sampler one gate swap at a time, reaching passers that clear the scan's four conditions at every depth probed, and seed lifts, re-optimizations at a larger seed count, holding every door above a support of $2^{12}$ in two independent basins. At the level of the $z$-marginal the same case is realized by an explicit joint law of $\beta=0$ that is $\varepsilon$-far while passing the residue, weight-histogram and anti-concentration doors at every depth. The corner it approaches, the half-cube marginal of maximal support $2^{n-2}$, is realizable by a one-local graph map, so building an imposter at that corner reduces to building a tag on it, and a hand proof makes the tag threshold uniform over corner maps and depths: a tag reaching $\beta\le3/16$ reads all but a bounded number of the seeds.

What remains open is a many-seed imposter whose $z$-marginal is far from uniform yet \emph{spread}, with a collision excess below the door's threshold, the collision excess being $2^{n-1}\sum_zP_z(z)^2$: one for the uniform marginal, larger the more the marginal concentrates, and at least $1+4t^{2}$ with $t=\mathrm{TV}(P_z,U_{n-1})$, so a marginal at distance $\varepsilon$ from uniform pays at least $25/16$, equality holding only for the two-level laws and attained there. Whether any affordable test catches such an imposter short of the $2^{(n-1)/2}/\varepsilon^2$ identity-testing block is the open residual, and the realized passers sit orders of magnitude above the floor their own $t$ allows. The wider premise $\beta(S)\le1/4$ has three-door passers with $\beta>3/16$ beyond the small coexistence depths. The census behind each of these statements is App.~C.

The residual is measured at a single $(\text{span},\text{rank},\text{alignment})$ point, at the four-door passers, App.~C. Its span, the width of the narrowest interval covering the conditional means $E[|z|\mid v]$ of the weight over the fixings, each fixing $v$ one cell of the conditioning, runs over $3.0$--$4.0$ there, a width its own dispersion bounds from below by the sharp span-versus-squared-deviation inequality \cite{KaiblingerSpangl2020}. That point carries $\mathbb F_2$-rank $=n-1$ \emph{exactly}, and its phase-aligned residual character sits below the one-sided sufficiency threshold of Open Problem~\ref{open:view} in five of the eleven read-based witnesses and above it in the other six, a split supported by positive cross-correlations.

What remains open is then two objects rather than four. The first is the $z$-marginal separation itself, Open Problem~\ref{open:dist}. The second is the \emph{view-average} object, Open Problem~\ref{open:view}: one class-separability requirement whose three faces are the across-fixing mean-span transfer, the cross-cell phase-diversity bound and the weight-side variance criterion, all reading the same cell-level quantity, a cell being one setting of the conditioned seeds.
\end{open}

\begin{open}[The view-average object: mixture transfer, phase diversity, and the weight-side variance criterion]\label{open:view}
Characterize exactly which bounded-fan-in samplers of locality at most $d$, the seed count unbounded, realize the residual of Open Problem~\ref{open:dist}: $\beta\le3/16$ with a $z$-marginal far from uniform, yet of collision-effective support $1/\sum_zP_z(z)^2$ beyond every anti-concentration budget, hence a fortiori of large $z$-support, near-Binomial and so evading the weight-histogram door, residue-near-uniform, and full $\mathbb F_2$-rank.

Three faces of that characterization are one obstruction rather than three: how much across-fixing mean spread a seed-limited $d$-local read can buy, how much of it survives the phase fan, the spread of the character's phase across cells, and how far below the binomial variance a histogram-clean read can sit. They are the \emph{mixture transfer}, that the three structure doors close the $\varepsilon$-close class at the operating point so that soundness reduces to Open Problem~\ref{open:dist}; the \emph{phase-diversity bound}, the cross-cell phase fan priced below; and the \emph{weight-side variance criterion}, the optimization-derived statement that no weight-histogram-clean sampler sits below the variance deficit $\delta^*(N)$, the shortfall of the weight variance below $N/4$. The law of total variance makes the across-fixing mean-spread term \emph{be} the mean span, the rank--span coupling is window-local rather than global, and the carrier of the phase-side residual is the same cell-level object the fan-priced sufficient condition reads, which is why the three stand or fall together. Each face, its criteria and its measurements are App.~C.

The residual is \emph{not} vacuous. The seed-gap ladder's $12$-seed rung of Sec.~\ref{sec:dichotomy} is an $\varepsilon$-far sampler that Test~1 accepts rather than merely fails to reject, and it leaves all three structure doors clean, so it shows the three $z$-structure doors to be insufficient on their own; it is not a witness of the residual, since door (2d) rejects it on the shared stream and the coordinate statistic rejects it independently. A sharper $\beta$ sits one rank dimension from the same corner, and the passers form a family rather than single points.

What is established is a pair of conditions on a cell-averaged residual-character magnitude. A \emph{necessary} one, machine-checked, is that $\beta\le3/16$ forces $\hat m_1:=\sum_vP(v)\|\chi_1(v)\|\ge\tfrac34(5/8-\tfrac13\hat m_3)$, where $\chi_k$ is the mod-$2p$ residual character of the view $v$, one cell of the conditioning, and $\hat m_3$ is the same average for $\chi_3$, so the residual needs a large cell-averaged residual-character magnitude. A \emph{sufficient} one needs no phase-alignment hypothesis, $\hat m_1\ge15/32+\hat D+\tfrac14\hat m_3\Longrightarrow\beta\le3/16$, with $\hat D$ the phase-fan deficit of the cells. The residual sits at the crossing of the two. The sharpest form is the two-channel criterion, which at $p=3$, where the spectrum of $(-1)^F$ has exactly the two odd channels, prices the second channel by its own sign: the per-cell conditional mean is then the sum of two \emph{real} channels, so $\tfrac43\hat M_1^{\mathrm a}+\tfrac13\hat m_3^{\sigma}\ge5/8\Longrightarrow\beta\le3/16$, where $\hat M_1^{\mathrm a}=\hat m_1-\hat D$ and $\sigma$ is the channels' relative sign, the one-sided condition being its $\sigma\equiv-1$ case; and on the class where the two channels are real and $\chi_1$ dominates in the cells where they oppose, that criterion is an \emph{equivalence} with $\beta\le3/16$, machine-checked.

The missing step is the \emph{sufficiency of that structure}: a proof that the seed-starved $d$-local read forces the two-channel bound. One of its two routes is closed, machine-checked: on a real-first-channel class the fan deficit is at most half the first-channel magnitude and that bound is attained, so at the necessary floor the aligned term reaches $5/16$ of the $5/8$ the criterion asks for; the fan is exactly the cosine of the character's phase against the view's, so what remains is a statement about that phase, and it is the $\chi_3$ channel's sign that has to make up the shortfall, which is how the five sub-threshold witnesses of Sec.~\ref{sec:dichotomy} pass. That bound is governed by the phase-aligned character and the second channel's sign rather than by the raw $\hat m_1$: explicit balanced $d=3$ circuits reach a high $\hat m_1$ yet $\beta>3/16$, because the residual character is phase-fanned. It is a known-hard lower-bound statement in the cyclic-group local-limit territory of \cite{KaneOstuniWu2024hamming,KaneOstuniWu2025symmetric,KaneOstuniWu2025uniform}.

Two further facts the measurements settle. Alignment is \emph{bought}, not inherited: over random $d=3$ samplers at each read size, the starved reads align but have no survivors, a small support tripping door (2d) by construction, while none of the survivors reaches its own aligned floor. And the shallowest non-linear onset within the balanced class is at depth $3$, every balanced depth-$\le2$ sampler's seed map being affine. Over the surviving, all-doors family the phase-aligned character and the across-fixing mean span are \emph{positively} correlated, so the two faces move together; whether the coupling is intrinsic is open, and it is the shared obstruction recorded at the end of Open Problem~\ref{open:dist}. The census, the constants of the two conditions, and the localization of the carrier are App.~C.
\end{open}
The two-door level form, residue $\cup$ rank, fails at small scale by the gated-column sampler of Sec.~\ref{sec:perfixer} and at every scale up to $n-1=4099$ by the side-disjoint family of App.~C, hence the third door; the three-door form is refuted as a dichotomy, App.~C, and the fourth, anti-concentration door (2d), closes the residual on the seed-limited class alone.

The residual splits into two forms. The Test-1 sliver $\beta\le3/16$ has its realized $\varepsilon$-far passers caught by (2d) on the shared stream of Sec.~\ref{sec:dichotomy}. The full premise $\beta\le1/4$ is where the known passers live: the small band's six carry $\beta>3/16$ and die at Test~1, as do the deeper tag-fixed ones, and the tag-varying passers populate the Test-1 sliver itself, App.~C.

The support side of the residual has been probed: lifting the $12$-seed sliver to $14$ seeds at $n-1=17$ holds all four doors clean while the support reaches $8457$, inside the shared stream's catch, App.~C, so the lift is a probe of the support curve rather than a residual witness.  The question is whether the support grows without bound with the seed count or saturates below every polynomial budget. The measured answer is that it grows with the seed count, though sub-exponentially, and falls with the target depth: at $n-1=18$ the exact support ascent reaches $3968$ of the $4096$ cube points at $12$ seeds, $11820$ in its second basin at $14$, and $12976$ at $16$, App.~C.

A directed search over the reachable $d$-local family sharpens the last clause: a $\beta$-minimizing search over exhaustive $3$-subset reads at $n-1\in\{10,12\}$ reaches $\beta\le3/16$ circuits only ever with either weight-law distortion, $\mathrm{histTV}\ge0.06$, or small support, $\le2^{10}$. The search finds \emph{zero} violations of this two-outcome disjunction over $437$ circuits, and the adversary maximizing the residual bias under $\{\mathrm{support}>2^{10},\mathrm{histTV}<0.06,\ \varepsilon$-far$\}$ finds that feasible region \emph{empty}. So the many-seed large-support $\varepsilon$-far imposter is not realized in the \emph{small}-$n$ reachable family, $n-1\in\{10,12\}$; the door combinations it does realize are mapped in App.~C.

The many-seed large-support, collision-loud imposter \emph{is} realized at larger depth, $n-1=16$, so its non-existence for all $\mathrm{NC}^0$ is false, and the question is which bounded-fan-in samplers realize it: a genuine lower bound rather than a missing simple lemma. The residual non-symmetric reach of $d$-local circuits is controlled by their Hamming-slice local-limit machinery \cite{KaneOstuniWu2024hamming,KaneOstuniWu2025symmetric} together with the $\Omega(\log(n/k)/\log\log(n/k))$ decision-depth bound of Filmus, Leigh, Riazanov, and Sokolov \cite{FilmusLeighRiazanovSokolov2023}, a known-hard lower-bound question that the literature states as further research on this tag, whose resolution would separate quantum from classical $\mathrm{NC}^0$ sampling.

In $\pm1$ form $L(z)=(-1)^{f(z)}$ with a $3$-bit read $z_S$, the least error over predictors reading $z_S$ is $\tfrac12\bigl(1-\mathrm{avg}\,\lvert\mathbb E[L\mid z_S]\rvert\bigr)$, attained by the per-cell majority read, and a sampler's own $\beta$ is at least that least error, so $\beta\le3/16$ forces $\mathrm{avg}\,\lvert\mathbb E[L\mid z_S]\rvert\ge5/8$, and $\mathrm{avg}$ is a sum of absolute values of linear functionals of the law, hence \emph{convex}. The structure theorem bounds $\mathrm{avg}$ on the symmetric-close class, the images close in total variation to a weight-only law. Each $\gamma'$-biased residual weight, a product law with per-coordinate bias the integer multiple $\gamma'$ of $2^{-d}$ the structure theorem supplies, is a wide binomial with a near-uniform residue, and on a symmetric near-Binomial image the conditional label mean depends only on the read's weight and stays, by the mod-$2p$ character rate, below the $5/8$ that $\beta\le3/16$ forces. On the residual class $\mathrm{avg}$ is a per-cell canceled signed character sum of the residual weight. The open question is therefore precisely a bound on two things: the per-cell mod-$2p$ asymmetry of that residual weight, and the cross-cell phase diversity a bounded read of at most $d$ seeds can produce. That is a genuine $d$-local lower-bound statement in the Kane--Ostuni--Wu and Filmus--Leigh--Riazanov--Sokolov locality-bound territory, whose $d$-stability holds on the reachable family but is not proved.

The reduction is complete up to the precise statement of this bound. Two structural facts about it are settled. First, the symmetric part is not the obstruction. Since the $\pm1$ label is a function of $|z|\bmod 2p$ alone, on a symmetric near-Binomial image the conditional mean of the label depends only on the read's weight and is bounded by the machine-checked mod-$2p$ character rate, so such images keep $\mathrm{avg}$ well below $5/8$. The residual is therefore the non-symmetric d-local reach. Second, a second-moment form bounds the target from one side. By Cauchy--Schwarz $\mathrm{avg}^2\le\mathbb E[\mathbb E[L\mid z_S]^2]$ is the $3$-bit predictability energy of the label, so $\beta\le3/16$ forces it above $25/64$, machine-checked, and the route closes exactly the samplers kept below the line, yet the full-cube baseline lies \emph{above} the line, so the bound is about $\le d$-seed reads, seed sharing, not the cube.

A seed-starvation picture sharpens this further. In the realized label-predictable $d$-local images few outputs read no read-seed, so almost all outputs are constrained by the bounded read. The resulting residual weight law, a rich mod-$2p$ family \cite{BeyersdorffDattaKrebsMahajanScharfenbergerFabianSreenivasaiahThomasVollmer2013}, is what must be bounded.

On the affine class the conjecture is proven with an exponential margin at any constant locality: Proposition~\ref{prop:affine} gives $\beta\ge1/2-2^{\frac d2-1}\sqrt p\,\cos(\pi/2p)^{n-1}$ for every full-rank affine sampler with more than $d$ odd-weight rows, and the complementary case of at most $d$ odd-weight rows falls in the parity-junta branch of Theorem~\ref{thm:perfixer}, bounding $\beta$ by the parity-leak floor $\tfrac12-\tfrac1{2p}+o(1)$ when the visible groups are many, the floor E4's measured $\beta$ sits just below. So the residual is confined to non-affine $z$-marginals, which would NC$^0$-sample the modular half-slice $H_c$ at full rank. One affine configuration sits outside both arguments, an affine coset of a hyperplane not through the origin, whose linear span is the whole space and whose linear part has rank $n-2$; the same character-sum computation reaches it, the parity character shifted by the coset offset, and bounds $\beta$ below by the parity-leak floor at every fixed locality, far above the dichotomy bound, so Test~1 catches it.

The symmetric case is confined by the Kane--Ostuni--Wu theorems. Their Hamming-slice sampling bound, Theorem~5.17 of the full version \cite{KaneOstuniWu2024hamming}, is stated additively, $\mathrm{TV}\ge1-\mathrm{loss}-(\tfrac12+\tfrac1{2p})$, with a loss term that is tower-scale, so that in that form the bound is vacuous at any experimental $n$. Its $n\to\infty$ content at the half-slice is exactly the constant this paper pins, $\tfrac12-\tfrac1{2p}$, which the even-weight reference sampler attains, so the contribution there is attainment rather than a new bound. Their sampling-structure theorem \cite{KaneOstuniWu2025symmetric} confines a $d$-local image that is $\tau'$-close to such a law to a mixture of the two parity slices and products with per-coordinate bias $\gamma'$, each component with residual bias far below the $5/8$ that $\beta\le3/16$ needs, quantified in Sec.~\ref{sec:dichotomy}.

What the confinement does not reach is the non-symmetric case, joint $(z,b)$ imposters whose conditionals within a half-slice are non-uniform, and it is there that Proposition~\ref{prop:affine}, with the parity-junta branch of Theorem~\ref{thm:perfixer}, closes the affine class. The details, the attainment constant and the fooling-distance reading of the two refuted families are in App.~C.
Within the constructions built here, only samplers passing all three doors at once challenge the repaired form, and frontal search finds none beyond the small band $n-1\le14$ of App.~C. That band's passers all die at Test~1; the $\beta\le3/16$ passers come from the tag-varying mutation search below rather than from this frontal one. An escalated directed search extends them to $n-1\in\{15,16,17,19\}$ at $\beta\le1/4$, the tag-fixed floor reaching $0.2209$, above $3/16$ by a $0.033$ margin, while a mutation search that also varies the tag reaches $0.1397$. The class is not bounded there: a larger-budget different-basin breach search reaches one at $n-1=22$, below the premise. Every breach-search passer is $\varepsilon$-far with $\beta>3/16$.

\noindent\textbf{Empirical phase transition.} A scan of $405$ random samplers whose output bits are XORs of $d\in\{1,2,3\}$ seed bits, at $42$ $z$-bits and $p=3$, shows a sharp bifurcation. Every random sampler has measured $\beta_{\rm pool}\in[0.487,0.495]$ at rank $23$--$42$, the only $\beta\approx0$ point is the degenerate block at rank $7$, and no sampler in between reaches full rank, Fig.~\ref{fig:influence}. The residue marginals corroborate the conjecture: no random sampler's residue exceeds total variation $0.011$ from uniform, so none concentrates on a half-slice, whose support would force variation at least $1/2$, and the residue-character modulus at full rank stays below $0.014$. These are plausibility checks on a restricted random model, not soundness evidence; read as a phase diagram, the empty corner, low violation at full rank, is the analogue of a forbidden region, and a completed Test~2 is the statement that the corner stays empty for every constant-locality sampler.

The \emph{non-degenerate} sub-case of Test~2 is disposed of by Watts--Parham's mod-$p$ anti-concentration, the bound on how much probability a weighted sum of bits can place on any single residue, together with the small-ball regime \cite{NguyenVu2013smallball}, the probability a sum of bits lands near a given value, of Test~2 inside their block model, in which the $z$-bits split into $t$ blocks whose per-block weights have nonzero increments. Their Fact~29 gives the decay $\sqrt p\,e^{-t/p^2}$ for any nonzero increments, their Corollary~30 turns it into $O(1/n)$-grade uniformity at $t=\Omega(p^3)$, and by their Lemma~31, whose coefficient hypothesis $0<a_i\le O(p/\log n)$ the model's increments satisfy, and Claim~32, every predictor agrees with the tag on at most $1/2+O(1/\log n)$ of inputs, so that $\beta\ge1/2-O(1/\log n)$, contradicting small $\beta$ \cite{BenewattsParham2026}. Any low-$\beta$ imposter falling under that model is therefore degenerate. Theorem~\ref{thm:perfixer} reproduces the same decay mechanism without the model hypothesis, by conditioning on a seed fixing. The open core is a universal structure test for degenerate NC$^0$ samplers.

The natural-proofs barrier \cite{RazborovRudich1997} does not apply, and the reason is precisely that the residual is a lower-bound question inside a constant-locality class. The criterion protects only a class hosting a strong pseudorandom-function generator, and what it would have to protect against here is a sampler, whose difficulty is not how hard it is to describe but what shape of distribution it can produce. Disguise is no help either: a fully pseudorandom $(z,b)$ puts the event $b\ne f(z)$ at probability $\beta=\tfrac12\pm\mathrm{negl}$, which Test~1 rejects outright, and $D_n$ itself is structured rather than pseudorandom, $\mathrm{AC}^0$ being learnable \cite{LinialMansourNisan1993}. The obstruction this paper meets is therefore not the natural-proofs one. It is the constraint a constant-locality read places on how a sampler distributes its randomness, which the collapse of Sec.~\ref{sec:perfixer} makes precise, and the residual of Open Problem~\ref{open:dist} is a statement inside that constraint.

The \emph{idealized} escape closes completely:

\begin{proposition}[Idealized g-varying escape is caught or vacuous]\label{prop:mixture}
Let $H_c:=\{z:|z|\bmod2p\in F^{-1}(c)\}$, and let $P_z=w\,\mathrm{Unif}(H_0)+(1{-}w)\,\mathrm{Unif}(H_1)$ be the $z$-marginal of an idealized $\beta\approx0$ sampler in which, per junta fixing, $z$ is uniform on the corresponding half-slice, and $g$-varying means the output bit $g$ selects the slice. Then
\begin{equation}
\begin{aligned}
\mathrm{TV}(P_z,\,U_{n-1})&=\bigl|w-\tfrac12\bigr|+o(1),\\
\mathrm{TV}\bigl(R_{P_z},R_U\bigr)&=\bigl|w-\tfrac12\bigr|+o(1),
\end{aligned}
\end{equation}
where $R_X$ denotes the $|z|\bmod2p$ marginal of the law $X$ and each $o(1)$ term is bounded by the L1 deviation $(2p{-}1)\cos(\pi/2p)^{n-1}$ of the character decay of Prop.~\ref{prop:decorr}. The mixture identities, the fiber-size estimates, and the residue-TV bound are verified in Lean.
\end{proposition}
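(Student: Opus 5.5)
The plan is to reduce both identities to one computation of residue masses. First compute the two half-slice masses under the uniform law. Write $\mu_c:=|H_c|/2^{n-1}=\Pr_{U}[F(R)=c]$, with $R=|z|\bmod 2p$. By Prop.~\ref{prop:decorr}, applied to the $n-1$ independent balanced single-bit blocks, the uniform residue law $R_U$ has nonzero Fourier modes bounded by $\cos(\pi/2p)^{n-1}$. Inverting the transform on $\mathbb Z/2p$ then gives $\sum_r|R_U(r)-\tfrac1{2p}|\le(2p-1)\cos(\pi/2p)^{n-1}=:\Delta$. Since $F$ has exactly $p$ preimages of each value (the jump-count structure of Lemma~\ref{lem:struct}; $F(r)=\mathrm{majmod}_p\oplus\mathrm{parity}$, and the CRT pairing $r\mapsto(r\bmod p,r\bmod2)$ shows each $\mathrm{majmod}$ value class is split evenly by parity), we get $|\mu_c-\tfrac12|\le\Delta/2$. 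This is the fiber-size estimate.

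For the first identity, note that $\mathrm{Unif}(H_c)=U|_{H_c}/\mu_c$, and that $H_0,H_1$ partition the cube. Hence $P_z(z)-U(z)=2^{-(n-1)}\bigl(w/\mu_0-1\bigr)$ on $H_0$ and $2^{-(n-1)}\bigl((1-w)/\mu_1-1\bigr)$ on $H_1$. This gives exactly
\[
\mathrm{TV}(P_z,U_{n-1})=\tfrac12\bigl(|w-\mu_0|+|1-w-\mu_1|\bigr)=|w-\mu_0|,
\]
using $\mu_0+\mu_1=1$. The triangle inequality with $|\mu_0-\tfrac12|\le\Delta/2$ yields $|w-\tfrac12|+o(1)$, with error at most $\Delta$.

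For the residue identity, push the same decomposition forward through $z\mapsto R$. The residue law of $\mathrm{Unif}(H_c)$ is $R_U$ restricted to $F^{-1}(c)$ and renormalized by $\mu_c$. The same two-level computation, now on $\mathbb Z/2p$ with the partition $F^{-1}(0)\sqcup F^{-1}(1)$, gives $\mathrm{TV}(R_{P_z},R_U)=|w-\mu_0|$ exactly. This is no accident: $R$ is a sufficient statistic for the mixture relative to $U$, since both laws are constant on residue fibers within each half-slice. Data processing is therefore tight. The same $o(1)$ bound then follows.

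The main obstacle is bookkeeping rather than ideas. The delicate points are getting the $L^1$ constant $(2p-1)\cos(\pi/2p)^{n-1}$ from Fourier inversion (summing $2p-1$ nonzero modes, each contributing at most $\cos^{n-1}/(2p)$ per residue across $2p$ residues), and confirming that $|F^{-1}(c)|=p$ so that the limit mass is exactly $\tfrac12$. I would check the latter by direct count: $\mathrm{majmod}_p$ equals $1$ on $(p-1)/2$ residues mod $p$, and XOR with parity balances it over the $2p$ CRT pairs.
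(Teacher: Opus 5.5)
Your proposal is correct and follows the paper's route. Like the paper, you partition the cube into the two half-slices, use $|F^{-1}(c)|=p$ together with the character decay of Prop.~\ref{prop:decorr} to get $|H_c|=2^{n-2}(1+o(1))$, and read both distances off the two-level density. Your exact identity $\mathrm{TV}(P_z,U_{n-1})=\mathrm{TV}(R_{P_z},R_U)=|w-\mu_0|$, obtained from the sufficiency of the residue, is a slightly sharper version of the same bookkeeping and bounds the error by $\tfrac12(2p-1)\cos(\pi/2p)^{n-1}$.
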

\begin{proof}[One line; the identities are App.~D]
The two half-slices partition the cube, so the mixture density is $w/|H_0|$ on $H_0$ and $(1{-}w)/|H_1|$ on $H_1$; each fiber of $F$ contains exactly $p$ residues, and the residue of a uniform $z$ is within $\exp(-\Theta(n/p^2))$ of uniform on $\mathbb Z/2p$ by the character decay of Prop.~\ref{prop:decorr}, whence $|H_c|=2^{n-2}(1+o(1))$ and the mixture's total variation to uniform is $|w-\tfrac12|+o(1)$. The residue marginal puts mass $w$ on the $p$ residues of $F^{-1}(0)$ and $1{-}w$ on those of $F^{-1}(1)$, and the same character decay perturbs that by at most its own amount.
\end{proof}
Consequently an idealized $\beta\approx0$ imposter with $\mathrm{TV}(P_z,U_{n-1})\ge\varepsilon/2$ has residue total variation $\ge\varepsilon/2-o(1)$ and is caught by the residue test at any threshold in the window $(\gamma,\varepsilon/2)$, which lies below the forced value and above the device's $\gamma$-bounded deviation; the only exact escape, $w=\tfrac12$, yields $\mathrm{TV}(P_z,U_{n-1})=o(1)$ and is not an imposter at all. The formal residual of Open Problem~\ref{open:dist} is thus confined to non-idealized conditionals, asymmetric or non-uniform within a half-slice, where the numerics of Fig.~\ref{fig:influence} show that concentration forces rank deficiency.

\paragraph{What a completed verifier certifies.} A completed Test~2 would give a polynomial-sample, assumption-free classical certificate that the device's samples cannot be produced by any constant-locality classical simulation. The class the verifier is defined against is NC$^0$, constant locality per output bit with no bound on the seed length, while Watts--Parham rule out samplers of depth $o(\log\log n)$ restricted to $n+n^{\delta'}$ random bits, a statement their full version extends to $kN+N^{\delta'}$ i.i.d.\ Bernoulli bits of entropy $1/k$ per bit for their advice-free target (Theorem~54), so the exclusion is robust even against biased coins. The two classes are incomparable, ours weaker in depth and stronger in seeds, and the collapse of Sec.~\ref{sec:perfixer} holds at every seed count, its Tur\'an live-set bound only growing with that count; soundness against that same seed-unrestricted class is the open problem of Sec.~\ref{sec:open}. For slice-type marginals the seed-unrestricted lower bound is Kane--Ostuni--Wu's Theorem~5.17 of the full version \cite{KaneOstuniWu2024hamming}.

The collapse covers the $\Theta(\log\log n)$ boundary by a threshold adjustment; the extension is capped by the fact that $D_n$ itself is classically samplable to distance $O(1/n)$, Sec.~\ref{sec:open}.

\section{The per-fixing collapse theorem}\label{sec:perfixer}
The structural residual yields to a seed-fixing decomposition, the move Viola and Watts--Parham make for a $d$-local function but selected here by Tur\'an's theorem on the collision graph so that it reaches an arbitrary NC$^0$ sampler, followed by the verified character-sum machinery applied conditionally; this removes the non-degeneracy hypothesis of their bound.

\paragraph{Seed-fixing decomposition.} Let each $z$-output depend on at most $d$ seed bits. Call two seeds \emph{colliding} if some output depends on both; Tur\'an's theorem then yields an independent \emph{live set} $\Lambda$ of size $t=\Omega(\min\{\ell,\ell^2/(d^2n)\})$, where $\ell$ is the seed count and either branch is $\Omega(n/d^2)$ for $\ell\ge n$. Fixing all seeds outside $\Lambda$ to $\rho$, the $z$-bits split into independent single-seed groups $\{T_j\}_{j\in\Lambda}$, group $T_j$ being the outputs whose unique live seed is $j$, each contributing a weight taking two values $a_j(\rho)$ and $a_j(\rho)+\delta_j(\rho)$. Call group $j$ \emph{visible} if $\delta_j(\rho)\not\equiv0\pmod{2p}$, and let $m(\rho)$ count the visible groups; invisible groups shift the residue by a constant.

\begin{theorem}[Per-fixing collapse]\label{thm:perfixer}
Let $S$ be an NC$^0$ sampler with locality $d$ and $\beta(S)<3/8$, and let $p\ge3$ be odd. Call a visible group odd if its increment $\delta_j(\rho)$ is odd, and let $m_{\rm odd}(\rho)$ count the odd visible groups. For a $1-\tfrac83\,\beta(S)$ fraction of fixings $\rho$ of every live set $\Lambda$, where a live set is an independent set of the collision graph of size at least the Tur\'an bound of Sec.~\ref{sec:perfixer}, one of two collapses holds,
\begin{equation}
m(\rho)\;\le\;d+9p^2\ln\!\bigl(4\sqrt{2p}\bigr)/\pi^2
\qquad\text{or}\qquad
m_{\rm odd}(\rho)\;\le\;d,
\end{equation}
and, for every fixing, the conditional $z$-support lies in an $\mathbb F_2$-affine subspace of dimension at most $t$, and the conditional residue $|z|\bmod2p$ is a function of the $m(\rho)$ visible live seeds alone. In the second branch the residue parity $|z|\bmod 2$ is a function of the $m_{\rm odd}(\rho)$ odd groups alone, a $d$-junta.
\end{theorem}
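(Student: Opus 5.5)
Fix a live set $\Lambda$ of size $t$ and condition on the fixing $\rho$ of the seeds outside it. I would prove the structural clauses first, since they hold for every $\rho$, and then argue the two collapses by averaging over fixings.

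\textbf{Structural clauses.} Because $\Lambda$ is independent in the collision graph, each $z$-output reads at most one live seed. Such an output is either a constant or a function $\phi(s_j)$ of its single live seed $s_j$. A function of one bit is affine over $\mathbb F_2$, so the conditional $z$-support lies in $c(\rho)+\mathrm{span}\{v_j:j\in\Lambda\}$, an affine subspace of dimension at most $t$. The group weight of $T_j$ takes the two values $a_j$ and $a_j+\delta_j$. Hence $|z|\bmod 2p$ is a constant plus $\sum_{j\text{ visible}}\delta_j s_j$, a function of the $m(\rho)$ visible seeds. The same computation modulo $2$ keeps only the odd-increment groups, which gives the parity clause.

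\textbf{Collapse.} Let $\beta_\rho$ be the conditional violation rate, so $\beta(S)=\mathbb E_\rho\beta_\rho$. By Markov, $\beta_\rho<3/8$ except on at most a $\tfrac83\beta(S)$ fraction of fixings, which is exactly the stated $1-\tfrac83\beta(S)$. It therefore suffices to show that a good fixing, one with $\beta_\rho<3/8$, satisfies one of the two bounds.

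The tag $b$ reads at most $d$ seeds, hence at most $d$ live seeds. Let $R\subseteq\Lambda$ be those live seeds, $|R|\le d$. Condition further on $s_R$. Then $b$ is constant, and the residue is a fixed offset plus the sum $W$ of the independent visible groups outside $R$. There are at least $m-d$ such groups, each a shift by $\delta_j s_j$ with $\delta_j\not\equiv0$.

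\textbf{Character decay and the tag error.} Apply Prop.~\ref{prop:decorr}. Each such group has $|q_{k,j}|=|\cos(\pi k\delta_j/2p)|$. This is at most $\cos(\pi/2p)$ whenever $k\delta_j\not\equiv0\pmod{2p}$, which is the hard part below. Expanding $(-1)^{F}$ in the $2p$ characters and comparing with a constant tag gives
\[
\bigl|\mathbb E(-1)^{F(W)}-\widehat{(-1)^F}(0)\bigr|\le\textstyle\sum_{k\ne0}|\widehat{(-1)^F}(k)|\,\cos(\pi/2p)^{m-d}.
\]
The mean $\widehat{(-1)^F}(0)$ vanishes because $F$ is balanced ($p$ residues each). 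The Fourier $\ell_1$ mass is at most $\sqrt{2p}$ by Cauchy--Schwarz. So a constant tag errs with probability at least $\tfrac12-\tfrac12\sqrt{2p}\cos(\pi/2p)^{m-d}$.

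Use $\cos x\le e^{-x^2/2}$. If $m-d>9p^2\ln(4\sqrt{2p})/\pi^2$, then $\cos(\pi/2p)^{m-d}<(4\sqrt{2p})^{-9/8}$, the $9/8$ absorbing the gap between $\pi^2/8p^2$ and the sharper constant. The error then exceeds $3/8$, contradicting $\beta_\rho<3/8$.

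\textbf{Main obstacle.} The decay fails for characters $k$ that annihilate every visible increment. Since $\delta_j\not\equiv0\pmod{2p}$, the only way this can happen is $k=p$ with all $\delta_j$ even. This is precisely where the parity branch enters.

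The character $k=p$ is the parity character $(-1)^W$. If more than $d$ visible groups are odd, then at least one odd group lies outside $R$. Its factor is $\cos(\pi\delta_j/2)=0$, so $q_p=0$ and that character contributes nothing.

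If the odd visible groups number at most $d$, we are in the second branch and there is nothing to prove. Otherwise the parity character is killed. For each remaining $k\ne0,p$, a visible group outside $R$ with $k\delta_j\equiv0$ must be treated separately. The cleanest fix I would try is to split the Fourier sum over the divisors of $2p$, which are $1,2,p,2p$ for prime $p$. For $k$ even with $k\ne0$, the condition $k\delta_j\equiv0\pmod{2p}$ forces $\delta_j\equiv0\pmod p$, that is $\delta_j=p$, an odd increment. Each such group then only flips parity, and I would show its contribution is dominated by the odd-group count. This bookkeeping, which makes the constant $9p^2\ln(4\sqrt{2p})/\pi^2$ come out exactly and covers mixed increments, is the step I expect to need the most care.
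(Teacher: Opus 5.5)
Your route is the paper's. The structural clauses from the independence of $\Lambda$, Markov at level $3/8$ over fixings, conditioning on the at most $d$ live seeds $R$ the tag reads, the per-group factor $|\cos(\pi k\delta_j/2p)|$, the $\sqrt{2p}$ bound on the Fourier $\ell^1$ mass, and killing the parity character $k=p$ with an odd visible group outside $R$ are all correct. Your constant computation via $\cos x\le e^{-x^2/2}$ also checks out.

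But the step you defer is a genuine gap, and the repair you sketch cannot work. Take even $k\neq0$ and a group with $\delta_j\equiv p\pmod{2p}$: its factor is $|\cos(\pi k/2)|=1$ exactly. So if every visible group outside $R$ has increment $p$ (then $m=m_{\rm odd}$ is as large as you please), the even modes do not decay at all. No count of odd groups helps, since those are exactly the groups that fail to damp these modes. For the same reason, your sentence that only $k=p$ can annihilate the visible increments is false as stated. Any bound on those terms has to come from the coefficients, not from the groups.

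The missing idea is that those coefficients vanish. By Lemma~\ref{lem:fiber}(i), $F(r+p)=1-F(r)$, so $\Phi=(-1)^F$ is antiperiodic, $\Phi(r+p)=-\Phi(r)$. Substituting $r\mapsto r+p$ in $\hat\Phi(k)=\frac1{2p}\sum_r\Phi(r)\omega^{-kr}$ gives $\hat\Phi(k)=-(-1)^k\hat\Phi(k)$. Hence $\hat\Phi(k)=0$ for every even $k$; your balance observation is the case $k=0$.

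Since $p$ is prime, every odd $k\in\{1,\dots,2p-1\}$ other than $p$ is a unit modulo $2p$. So $k\delta_j\not\equiv0\pmod{2p}$ for every visible group, and each such group contributes at most $\cos(\pi/2p)$. The only surviving non-unit mode is the parity character, with $|\hat\Phi(p)|=1/p$, which your odd-group argument already handles. With that one line you get $\beta_\rho\ge\frac12-\frac12\bigl(\mathbf 1[m_{\rm odd}\le d]/p+\sqrt{2p}\cos(\pi/2p)^{m-d}\bigr)$ on each $s_R$-cell, and the rest of your proof goes through unchanged.
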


\begin{proof}[Proof sketch; the machine-checked chain is App.~C]
Condition on $\rho$. The spectrum of $\Phi:=(-1)^F$ on $\mathbb Z/2p$ is supported on odd $k$, since $\Phi(r+p)=-\Phi(r)$ for odd $p$, and its parity coefficient is exactly $\hat c_p=1/p$. Expanding $\Phi(W)=\sum_k\hat c_k\,\omega^{kW}$ and factoring $\omega^{kW}=\prod_j\omega^{kW_j}$ over the independent groups, a $d$-local $g$ reading live seeds $K$ obeys
\[
\begin{aligned}
\bigl|\mathbb E[\Phi(W)\,(-1)^g\mid\rho]\bigr|&\le\tfrac{\mathbf 1[m_{\rm odd}(\rho)\le d]}p\\
&\qquad+\sqrt{2p}\,\cos(\pi/2p)^{m(\rho)-d},
\end{aligned}
\]
because the parity character $k=p$ is exactly one on visible groups of even increment and exactly zero on odd-increment groups, so it survives with its coefficient $1/p$ only when every odd visible group lies in $K$, while every other character of $\Phi$ has $k$ invertible modulo $2p$ and each visible group contributes at most $\cos(\pi/2p)$, at most $d$ groups being exempt through $K$. The affine clause follows because after the fixing the outputs outside the live groups are constant and each group toggles between two fixed patterns, so the conditional $z$-support is a translate of the span of at most $t$ toggle patterns, and the conditional residue depends on the live seeds only through the visible groups. Summing, $\beta(\rho)\ge\tfrac12-\tfrac12\bigl(\tfrac{\mathbf 1[m_{\rm odd}\le d]}p+\sqrt{2p}\cos(\pi/2p)^{m(\rho)-d}\bigr)$, and since $-\ln\cos(\pi/2p)\ge\pi^2/(9p^2)$ with slack that keeps the printed threshold deliberately loose, fixings with $m(\rho)>d+9p^2\ln(4\sqrt{2p})/\pi^2$, where the cosine term is at most $1/4$, \emph{and} $m_{\rm odd}(\rho)>d$, where the parity term vanishes, have $\beta(\rho)\ge\tfrac38$, so they have measure at most $\tfrac83\,\beta(S)$.
\end{proof}

At the numerics' scale both branches bind: at $p=3$, $d=2$ the first branch's threshold $d+9p^2\ln(4\sqrt{2p})/\pi^2\approx21$ sits well inside the $n-1\le112$ $z$-bits. Thus a $\beta$-small sampler must, on a $1-\tfrac83\beta(S)$ fraction of fixings, either reduce the residue-relevant randomness to $d+Cp^2\log p$ live bits, or reduce the residue parity to a $d$-junta while the residue moves only in even steps plus those $d$ toggles.

No poly-sample test observes either branch directly: the visible-group count is internal to the construction, and certifying junta structure from samples is exponentially costly, Sec.~\ref{sec:open}. The theorem's role is to pin the structure a completed Test~2 must catch; its one verifier-facing consequence, the conditional soundness extension to depth $\le\log\log n$, is stated in Sec.~\ref{sec:open}.

The hypothesis $\beta(S)<3/8$ is the dichotomy level itself: it covers the Test-1 survivors, $\beta<3/16$, and the conjecture class, $\beta\le1/4$, with room, the measure bound being the Markov ratio $\beta(S)/(3/8)$.

The $2p$-block sampler collapses completely, $m\equiv0$; the dyadic terminal construction E4, one of the four adversarial constructions of Table~S1, lies in the second branch, its increments $\{1,2,4\}$ giving $m_{\rm odd}=1\le d=2$, which is why its measured $\beta_{\rm pool}=0.328$ lies just below the asymptotic parity-leak value $\tfrac12(1-\tfrac1p)=\tfrac13$, the exhaustive best-2-local value up to sampling noise, the full branch bound being vacuous at E4's visible-group count.

Watts--Parham's non-degenerate regime is the opposite extreme, $m$ maximal, forcing $\beta\approx1/2$, which reproduces their Fact-29 decay mechanism as the $m$-maximal case of Theorem~\ref{thm:perfixer}.

Fig.~\ref{fig:cascade} confirms the knee numerically; the knee construction uses unit, hence odd, increments, so its applicable bound is $\tfrac12-\tfrac12\sqrt{2p}\cos(\pi/2p)^{m-d}$ without the parity term.

The escape battery shows the residual mixture case: E2, a fresh random mask on every position, passes both the residue and rank doors and is caught by Test~1; the odd-weight, weight-degenerate families of App.~C join it through both structure doors. For these affine imposters the Test-1 catch is provable with an exponential margin. Excluding the six small-band coexistence points that Test~1 alone catches, every imposter construction in Table~S1 with $\beta<0.4$ is rank-deficient, residue-concentrated, or weight-distorted, the gated-column escape of Sec.~\ref{sec:perfixer} falling in the third class.

\begin{proposition}[Full-rank affine imposters pay $1/2-o(1)$, at any constant locality]\label{prop:affine}
Let $P$ have an affine $z$-marginal, $z=S\cdot V\oplus z_0$ with $S$ uniform on $\{0,1\}^K$ and $V$ of rank $n-1$, let $\mathcal M$ be the set of odd-weight rows of $V$ with $|\mathcal M|>d$, and let $\hat g_T$ be the Walsh coefficients of a $d$-local output bit $g$ on the $d$ seed coordinates it reads, its Fourier coefficients over those seeds. Then
\begin{equation}
\begin{aligned}
\beta=\Pr\bigl[g(S)\ne F(|z|\bmod 2p)\bigr]\ \ge\ \frac12-\frac12\Bigl(\sum_{|T|\le d}|\hat g_T|\Bigr)\\
\times\sqrt p\,\cos(\pi/2p)^{\,n-1}\ \ge\ \frac12-2^{\frac d2-1}\sqrt p\,\cos(\pi/2p)^{\,n-1},
\end{aligned}
\end{equation}
the second step by Cauchy--Schwarz on the $2^d$ coefficients. A single-character $g$ has $\sum_T|\hat g_T|=1$ and recovers $\beta\ge\tfrac12-\tfrac12\sqrt p\,\cos(\pi/2p)^{n-1}$, the case machine-checked in Sec.~\ref{sec:lean}; at $d=2$, the locality of the battery and the scans, the envelope prefactor is $1$.
\end{proposition}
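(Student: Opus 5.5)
The plan is to rewrite $\beta$ as a correlation and expand both sides in characters. Put $\Phi:=(-1)^F$ on $\mathbb Z/2p$. Then
\[
\beta=\tfrac12-\tfrac12\,\mathbb E\bigl[(-1)^{g(S)}\,\Phi(|z|\bmod 2p)\bigr],
\]
so it suffices to bound the correlation by $\bigl(\sum_{|T|\le d}|\hat g_T|\bigr)\sqrt p\cos(\pi/2p)^{n-1}$.

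Two Fourier expansions feed this bound.
\begin{itemize}
\item Expand $\Phi(W)=\sum_k\hat c_k\,\omega^{kW}$ as in the proof of Theorem~\ref{thm:perfixer}. Its spectrum is supported on the $p$ odd $k$, because $\Phi(r+p)=-\Phi(r)$. Parseval with $\sum_k|\hat c_k|^2=1$ and Cauchy--Schwarz over those $p$ terms give $\sum_k|\hat c_k|\le\sqrt p$.
\item Expand $(-1)^{g(S)}=\sum_{|T|\le d}\hat g_T\chi_T(S)$ over the at most $d$ seeds that $g$ reads.
\end{itemize}
The triangle inequality then reduces the claim to one estimate, for every odd $k$ and every $|T|\le d$:
\[
\bigl|\mathbb E[\chi_T(S)\,\omega^{k|z|}]\bigr|\le\cos(\pi/2p)^{n-1}.
\]
Summing these gives the first inequality of the proposition. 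The second inequality is Cauchy--Schwarz on the $2^d$ Walsh coefficients, using $\sum_T|\hat g_T|^2=1$.

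To compute $\mathbb E[\chi_T(S)\omega^{k|z|}]$, I use affinity coordinatewise. Write $\omega^{k|z|}=\prod_i\omega^{kz_i}$. Since $z_i\in\{0,1\}$,
\[
\omega^{kz_i}=\alpha_k+\beta_k(-1)^{z_i},\qquad \alpha_k=\tfrac{1+\omega^k}{2},\quad \beta_k=\tfrac{1-\omega^k}{2},
\]
with $|\alpha_k|=|\cos(\pi k/2p)|$. Each $(-1)^{z_i}$ equals $\pm\chi_{v_i}(S)$, where $v_i$ is the $i$-th column of $V$. Because the $n-1$ columns are linearly independent, the characters $\prod_{i\in A}\chi_{v_i}$ are distinct for distinct $A$. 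Therefore $\chi_T$ pairs with at most one $A\subseteq[n-1]$, namely the one with $\sum_{i\in A}v_i=\mathbf 1_T$, and otherwise the expectation vanishes. When it does not vanish, its modulus is
\[
|\alpha_k|^{\,n-1-|A|}\,|\beta_k|^{\,|A|}.
\]

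The parity channel $k=p$ behaves exactly as in Theorem~\ref{thm:perfixer}. There $\alpha_p=0$, so the term survives only if $A=[n-1]$. The sum of all columns is the indicator of the odd-weight rows, $\mathbf 1_{\mathcal M}$, and $|\mathcal M|>d\ge|T|$ rules out $\mathbf 1_T=\mathbf 1_{\mathcal M}$. This is exactly where the hypothesis on $\mathcal M$ enters.

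The main obstacle is the remaining invertible channels $k\ne p$. When $A$ is nonempty, the factor $|\beta_k|=|\sin(\pi k/2p)|$ can exceed $\cos(\pi/2p)$. The clean exponent $n-1$ then holds only for the $A=\emptyset$ term, that is $T=\emptyset$. For $T\neq\emptyset$ the honest bound is $\cos(\pi/2p)^{\,n-1-|A|}$. My first attempt would be to show that the solving set $A$ is small, at most $d$ or so, after choosing a basis adapted to $g$'s read. Concretely, I would extend the columns to a basis of $\mathbb F_2^K$ and reparametrize $S$ so that the $z$-bits become the first $n-1$ seed coordinates; then $g$'s at most $d$ read seeds touch at most $d$ of the $z$-coordinates through the inverse map. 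If that fails, I would accept a $\cos(\pi/2p)^{\,n-1-d}$ loss, as in Theorem~\ref{thm:perfixer}, and fold the resulting factor $\cos(\pi/2p)^{-d}$ into the prefactor. Either way the bound stays exponentially small at constant $d$ and $p=o(n^{1/3})$. I expect this exempt-set bookkeeping, not the character algebra, to be where the stated exponent needs care.
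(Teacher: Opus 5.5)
The gap is in your last paragraph. You leave the estimate $|\mathbb E[\chi_T(S)\,\omega^{k|z|}]|\le\cos(\pi/2p)^{n-1}$ unproved for $T\ne\emptyset$, and neither repair you offer works.

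The first repair claims that after a basis change the solving set satisfies $|A|\le d$. This is false. $A$ is the support of the unique $u$ with $Vu=\mathbf 1_T$, and a single read seed can be a combination of many $z$-coordinates. For instance, take $z_i=S_i\oplus S_{i+1}$ for $i<m$, $z_m=S_m$, and $z_i=S_i$ for $i>m$. Every output is $2$-local, $V$ is invertible, and $|\mathcal M|=n-m$. Yet $S_1=z_1\oplus\cdots\oplus z_m$, so the single read seed $T=\{1\}$ forces $|A|=m$, which can be of order $n$. The fallback rests on the same cap $|A|\le d$; the $d$ exempt groups of Theorem~\ref{thm:perfixer} have no counterpart here. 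Even granted, it would give the prefactor $2^{d/2-1}\sqrt p\cos(\pi/2p)^{-d}$, not the one stated.

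What you are missing is that the obstacle does not exist. You claim that $|\beta_k|=|\sin(\pi k/2p)|$ can exceed $\cos(\pi/2p)$, but this fails on every channel you still have to treat. For odd $k\ne p$, the difference $k-p$ is even, nonzero and at most $p-1$ in modulus, so
\[
|\beta_k|=\bigl|\cos\bigl(\pi(k-p)/2p\bigr)\bigr|\le\cos(\pi/p)<\cos(\pi/2p),
\]
while $|\alpha_k|\le\cos(\pi/2p)$ for every odd $k$.

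Equivalently, $(-1)^{z_i}\omega^{kz_i}=\omega^{(k+p)z_i}$: the coordinates in $A$ carry the character at the shifted frequency $k+p\not\equiv0\pmod{2p}$. This is the ``shifted character sum'' of the paper's one-line proof. The frequency $k=p$ is the only one at which a shifted factor reaches modulus one, and you already dispose of it. Hence every surviving term $|\alpha_k|^{n-1-|A|}|\beta_k|^{|A|}$ is at most $\cos(\pi/2p)^{n-1}$, whatever $A$ is.

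With this one line inserted, the rest of your argument proves the proposition exactly as stated, with no bookkeeping on $A$. That rest is the odd-mode Parseval bound $\sum_k|\hat c_k|\le\sqrt p$, the vanishing of the parity channel because $\mathbf 1_T\ne\mathbf 1_{\mathcal M}$, and Cauchy--Schwarz over the $2^d$ Walsh coefficients.
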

\begin{proof}[One line; the character computation is App.~D]
The parity factor is exactly linear in the seeds, $(-1)^{|z|}=\chi_{\mathcal M}(S)$ with $\chi_{\mathcal M}=(-1)^{\sum_{j\in\mathcal M}s_j}$, so the correlation reduces to a shifted character sum whose modulus is at most $\cos(\pi/2p)^{n-1}$ once $|\mathcal M|>d$, and a Cauchy--Schwarz step over the $2^d$ Walsh coefficients of a general $d$-local $g$ gives the envelope. The dichotomy is the affine twin of Theorem~\ref{thm:perfixer}: the kernel plays the live set, and the collapse is exact rather than measure-most.
\end{proof}

Linearity extends the bound to non-affine mixtures: if $P_z=w\,P_1+(1{-}w)\,P_2$ each $P_i$ a full-rank affine marginal meeting the hypotheses, then $|\mathbb E_P[\Phi\,\chi_T]|\le w\,B+(1{-}w)\,B=B$ for $B$ the proof's bound, so no mixture passes both structure doors below $\beta=1/2-o(1)$ either. Proposition~\ref{prop:affine} covers E2, whose mask rows alone span $\mathbb F_2^{n-1}$, and the full-rank odd-weight families: at the battery's $112$ $z$-bits it gives $\beta\ge1/2-1.7\cdot10^{-7}$, at the scan's $42$ $z$-bits $\beta\ge0.4959$, the exponent $n-1$ there being the $42$-bit $z$-dimension, with the debiased re-measurement $\beta\in[0.4961,0.4973]$ at $M=2\cdot10^5$ samples sitting above the $n=42$ bound. The $M=600$ readings $0.450$ and $0.415$ are the selection bias of taking the minimum over the $172$-predictor pool and over the samplers drawn at each cell. From the proven per-predictor base $0.4959$ the floor is $0.4959-\sqrt{2\ln172}/(2\sqrt{600})\approx0.43$ over the predictors alone and $\approx0.42$ once the sampler draws are included, and at $M=2\cdot10^5$ samples the residual floor from the unshifted base $0.5$ is $0.4964$, inside the re-measured range.

\begin{figure}[!htbp]
\centering\includegraphics[width=\linewidth]{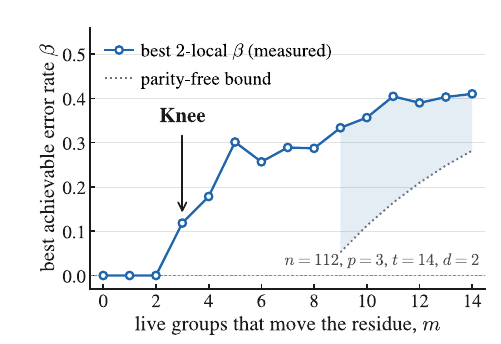}
\caption{The collapse theorem, measured. The horizontal axis is $m$, the number of live groups whose increment moves the residue $|z|\bmod 2p$, a live group being the set of output bits toggled by one free seed bit and its increment the weight change that toggle causes; the vertical axis is the best violation rate $\beta=\Pr[b\ne f(z)]$ the cascade construction reaches. The sweep is at $112$ $z$-bits, $p=3$, $t=14$ independent single-seed groups and locality $d=2$. Measured $\beta=0$ while $m\le d$, rising to $0.118$ at $m=3$, the knee. Dotted: the general parity-free smoothing bound $\tfrac12-\tfrac12\sqrt{2p}\cos(\pi/2p)^{m-d}$, drawn only where it is positive, which at these parameters means from $m=9$; below that it says nothing. The theorem's per-fixing form carries the exponent $m$ and is the tight one.}
\label{fig:cascade}
\end{figure}

\paragraph{The cascade: terminal dichotomy and the named residual.} Theorem~\ref{thm:perfixer} reduces the residue, conditionally, to few live bits plus the fixed part's contribution $|z_{\rm fix}(\rho)|\bmod2p$. At the terminal end of the recursion, with no fixed part, the dichotomy closes:

\begin{proposition}[Terminal dichotomy]\label{prop:terminal}
Suppose the residue takes the terminal form $W=(c+\sum_{j\in V}\delta_jr_j)\bmod2p$ for independent live bits $r_j$, and let $h:=\#\{j:\delta_j\not\equiv0\pmod p\}$ count the groups that move the \emph{mod-$p$} residue; note $h\le m(\rho)$, since a group whose increment is a multiple of $p$ but not of $2p$ counts in $m$ and not in $h$, being invisible modulo $p$ yet visible modulo $2p$. Then \textup{(i)} the mod-$p$ marginal of $W$ has support of size at most $2^h$, so its total variation from uniform is at least $1-2^h/p$, nonvacuous when $2^h<p$, with equality when the marginal is uniform on a support of size $s$, the case verified. For $h\le d$ with $2^d<p$ the bound tends to $1$ and Test~2a catches. \textup{(ii)} For $h$ at or beyond the collapse threshold $d+9p^2\ln(4\sqrt{2p})/\pi^2$, every character of $\Phi$ except the parity character decays on the visible groups, so $\beta\ge\tfrac38$ at the threshold, and $\ge\tfrac38-\tfrac1{2p}$ when the parity character survives with its coefficient $1/p$. The bound improves toward the parity-leak floor $\tfrac12-\tfrac1{2p}$ as $h$ grows, and at the operating scale $p=3$ it clears the dichotomy bound $\varepsilon/2=3/16$, since $\tfrac38-\tfrac16=\tfrac5{24}$.
\end{proposition}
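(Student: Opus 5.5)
The plan is to handle the two parts separately; both reduce the terminal form to facts already established.

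For (i), I would reduce modulo $p$. Since $p$ is odd, the map $\mathbb Z/2p\to\mathbb Z/p$ is well defined. Under it, $W\bmod p=(c+\sum_{j\in V}\delta_jr_j)\bmod p$. Every group with $\delta_j\equiv0\pmod p$ drops out, so $W\bmod p$ is a function of the $h$ live bits $\{r_j:\delta_j\not\equiv0\pmod p\}$ alone. Its support therefore has at most $2^h$ points, say $s\le2^h$.

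Next comes the elementary bound. Let $Q$ be a law on $\mathbb Z/p$ supported on a set $A$ with $|A|=s$. Then $\mathrm{TV}(Q,U_p)\ge U_p(\mathbb Z/p\setminus A)=1-s/p\ge1-2^h/p$. Equality holds when $Q$ is uniform on $A$: the positive part of $Q-U_p$ is then $s(1/s-1/p)=1-s/p$. If $h\le d$ and $2^d<p$, the bound is at least $1-2^d/p$. In the regime $p=n^c$ with $d$ fixed, this tends to $1$. It therefore exceeds $\tau$, and by data processing ($\mathrm{TV}$ on $\mathbb Z/2p$ dominates that on $\mathbb Z/p$) door (2a) rejects at its uniformity-testing rate.

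For (ii), I would rerun the character argument from the proof of Theorem~\ref{thm:perfixer}, now with exact terminal independence instead of conditioning on $\rho$. Expand $\Phi(W)=\sum_{k\text{ odd}}\hat c_k\omega^{kW}$ and factor $\mathbb E\,\omega^{kW}=\omega^{kc}\prod_j\mathbb E\,\omega^{k\delta_jr_j}$. Pair this with a $d$-local tag $g$ reading live seeds $K$, expand $(-1)^g$ over $K$, and condition on $r_K$. The groups in $K$ contribute unit-modulus factors, and at most $d$ such groups are exempt. For a non-parity character $k$ (odd, $k\ne p$), I would argue via $\gcd(k,2p)=1$ that $k\delta_j\not\equiv0\pmod{2p}$ whenever $\delta_j\not\equiv0\pmod p$. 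Then each such factor has modulus $|\cos(\pi k\delta_j/2p)|\le\cos(\pi/2p)$ by Prop.~\ref{prop:decorr}(ii). This applies to at least $h-d$ groups.

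Summing as in the theorem gives $|\mathbb E[\Phi(W)(-1)^g]|\le\sqrt{2p}\cos(\pi/2p)^{h-d}+\mathbf 1[\text{parity survives}]/p$. At $h$ at or beyond the threshold the cosine term is at most $1/4$, using $-\ln\cos(\pi/2p)\ge\pi^2/(9p^2)$. This yields $\beta\ge\tfrac12-\tfrac18=\tfrac38$ without the parity term, and $\beta\ge\tfrac38-\tfrac1{2p}$ with it. As $h\to\infty$ the cosine term vanishes, leaving $\tfrac12-\tfrac1{2p}$. At $p=3$ this is $\tfrac38-\tfrac16=\tfrac5{24}>\tfrac3{16}$.

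I expect the main obstacle to be the Parseval/Cauchy--Schwarz step that produces the $\sqrt{2p}$ prefactor. The bound $\sum_{k\ne p}|\hat c_k|\le\sqrt{2p}\,(\sum|\hat c_k|^2)^{1/2}\le\sqrt{2p}$ is easy. The delicate part is interleaving the sum over $g$'s Walsh coefficients so that no extra $2^{d/2}$ appears. I would follow the theorem by conditioning on $r_K$ rather than expanding $g$: fix $r_K$, bound $|\mathbb E[\Phi(W)\mid r_K]|$ uniformly, then average, so that $|g|=1$ costs nothing.

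A second subtlety is whether the bound should count $h$ groups or all $m$ visible groups. Groups with $\delta_j\equiv p\pmod{2p}$ give factor $\omega^{kp}=-1$ for odd $k$, so they contribute no decay. They are exactly the groups that keep the parity character alive, which is why the statement counts $h$ rather than $m$.
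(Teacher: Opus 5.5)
Your proof is correct and follows the paper's route. For (i) it uses the same subset-sum count followed by the $1-s/p$ bound. For (ii) it reruns the character argument of Theorem~\ref{thm:perfixer}.

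The paper calls (ii) the $K=\varnothing$ case of that argument. Your version conditions on $r_K$ and allows up to $d$ exempt groups, which is slightly more general and is what actually produces the $d$ in the threshold. Your data-processing remark (the mod-$2p$ distance dominates the mod-$p$ distance) supplies a step the statement leaves implicit for door (2a).

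Your last paragraph, however, has the mechanism backwards.
\begin{itemize}
\item For $\delta_j\equiv p\pmod{2p}$ the per-group factor is not the phase $\omega^{kp}=-1$ but its average, $\mathbb E\,\omega^{k\delta_jr_j}=\tfrac12\bigl(1+(-1)^k\bigr)=0$ for every odd $k$. Such a group annihilates every character of $\Phi$, including the parity character, since its increment is odd. This is Lemma~\ref{lem:fiber}(i): toggling that bit sends $W\mapsto W+p$ and flips $F$. So if the group lies outside the tag's read, $\Phi(W)$ is exactly balanced given everything else and $\beta=\tfrac12$.
\item These groups therefore do not keep the parity character alive. The even-increment groups do, where the parity factor $\tfrac12\bigl(1+(-1)^{\delta_j}\bigr)$ equals $1$. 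The parity character dies as soon as any odd-increment group lies outside $K$.
\item Counting $h$ rather than $m$ is not forced. Since $\gcd(k,2p)=1$ for odd $k\ne p$, every non-parity character decays by $\cos(\pi/2p)$ on all $m$ visible groups, not only on the $h$ that move the mod-$p$ residue. Using $h\le m$ is valid but weaker, and it keeps (ii) in the same variable as (i).
\end{itemize}
Your chain survives, because the $\tfrac38-\tfrac1{2p}$ bound charges the parity term its full coefficient $1/p$ and never uses the false claim. The stated rationale should still be reversed.

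A minor point: the reduction $\mathbb Z/2p\to\mathbb Z/p$ is well defined because $p\mid 2p$. Oddness of $p$ is what makes the mod-$2p$ residue equivalent to the pair of mod-$p$ and mod-$2$ residues.
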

\begin{proof}[One line; the cases are App.~D]
For (i), each live bit contributes $\delta_j$ or $0$, so the support of the mod-$p$ marginal is contained in $c$ plus the subset sums of the $h$ nonzero increments, at most $2^h$ points, and any distribution supported on $s$ points is at distance at least $1-s/p$ from uniform. Part (ii) is the $K=\varnothing$ case of the argument of Theorem~\ref{thm:perfixer}. Between the two the toggle structure rules: a group with $\delta_j=p$ flips $F$ exactly, while even-increment groups leave the parity invariant, so the parity character sees only the odd groups and the step spectrum sees all of them.
\end{proof}

The sharpest terminal escape is construction E4 of Table~S1, the dyadic one: dyadic visible weights $\delta_j=2^j$, $j<\kappa$, $\kappa=\lceil\log_2 2p\rceil$, followed by equalizing padding blocks that spread the residue over all $2p$ residues. The dyadic groups alone would leave total variation $r(2p-r)/(2p\,2^{\kappa})$, with $r=2^{\kappa}\bmod 2p$, which is $1/6$ at $p=3$; the padding closes the gap, and E4 passes Test~2a with empirical residue total variation $0.010$ at the battery's sample size.

Its best-$2$-local error measures $\beta=0.328$, by exhaustive search over its $19$ seeds up to sampling, and it is rank-deficient, rank $20$ against $112$ $z$-bits, so Test~2b catches it. At the operating point $\varepsilon=3/8$ of Sec.~\ref{sec:prelim}, Test~1 catches it as well, since $0.328>3/16$, leaving E4 the tightness check of the parity-leak term rather than a Test-2b-only catch. The named residual of the completeness question is thus a single \emph{mixture-transfer} step. Theorem~\ref{thm:perfixer} leaves the fixed-part centers $c(\rho)$; if the fixed part regenerates visible structure it is itself subject to the same collapse, recursively. Closing the recursion means proving that either the residue becomes a function of $O(d)$ seeds, caught by Test~2a as in (i), or the support degenerates, caught by Test~2b by Lemma~\ref{lem:halfspan}; that is exactly the statement that no mixture of terminal dichotomies across fixings passes all three doors. E2-type escapes, full-rank masks paying $\beta=1/2-o(1)$ by Proposition~\ref{prop:affine}, show the pressure at every scale.

\paragraph{The mixture-transfer mechanism: character cascade.}
The named residual admits a sharp formulation. Expanding $(-1)^{F(W)}$ over the spectrum of Fig.~\ref{fig:mix}, support on odd $k$, $\sum_k|\hat c_k|=\Theta(\ln p)$, $|\hat c_p|=1/p$ exactly, and conditioning on the fixing $\rho$ gives the \emph{character cascade}
\begin{equation}
\varphi_k \;=\; \mathbb E_\rho\Bigl[\psi^{\rm fix}_k(\rho)\cdot\prod_{j\in\Lambda}c_{j,k}(\rho)\Bigr],
\end{equation}
where $\varphi_k=\mathbb E\,\omega^{kW}$ is the global residue character, $\psi^{\rm fix}_k(\rho)=\omega^{k|z_{\rm fix}(\rho)|}$ the fixed part's phase, and $c_{j,k}\le\cos(\pi/2p)$ the verified per-group decay on visible groups. Closing the recursion is the assertion that no mixture beyond the terminal-dichotomy outcomes is stable under this cascade; that assertion is exactly Open Problem~\ref{open:view}.

\begin{figure}[!htbp]
\centering\includegraphics[width=\linewidth]{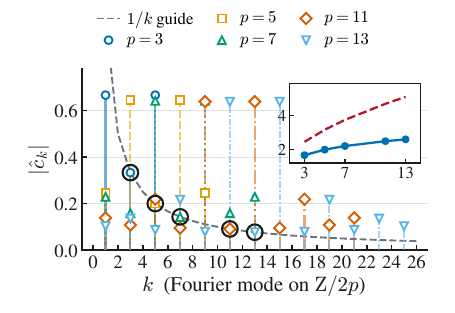}
\caption{What the collapse theorem prices. Main panel: the Fourier coefficients $|\hat c_k|$ of the tag's sign $(-1)^F$ at mode $k$ on $\mathbb Z/2p$, one curve per prime $p=3,5,7,11,13$, with $F$ the tag of Fig.~\ref{fig:residue}. The support lies entirely on odd $k$, and the coefficient at $k=p$ is exactly $1/p$, circled, against the dashed $1/k$ guide: that coefficient is the parity leak, the one term of the spectrum that survives when every other character decays. Inset: the sum of the coefficients against $p$, dashed at the machine-checked bound $\sqrt{2p}$, which is loose since the growth is logarithmic in $p$. The parity-aligned even-weight example of Sec.~\ref{sec:dichotomy} is measured in Fig.~S1.}
\label{fig:mix}
\end{figure}

Its mechanism is the \emph{rank--residue linkage}: full $\mathbb F_2$-rank of the $z$-support requires the fixed-part bases to carry at least $n-t$ dimensions of variation, since group toggles contribute at most $t$. Base variation enters the residue through the base weight, so the centers $c(\rho)$ become a function of unboundedly many fixed seeds, which a $d$-local $g$ cannot track. The perturbation step of this argument is the machine-checked triangle inequality.

\paragraph{The terminal frontier: the rank door is essential.} The residual admits an exact computation that isolates the rank door. Restrict to the \emph{terminal class}: a per-fixing residue with $\rho$-independent increments and an arbitrary center law, exactly enumerable at small scale. Every \emph{realization} of the abstract class's concentrated points is caught, a one-parameter dial over twenty-eight full samplers landing every point at $\beta\in[0.486,0.491]$. The structural residual is thereby pinned to a concentrated center law at full $\mathbb F_2$-rank; the analytic skeleton and the enumeration are App.~C.

\paragraph{The non-terminal residual: a gated escape.} The $\rho$-adaptive mechanism is realizable at $d=2$, and it escapes the residue and rank doors at small scale. The \emph{gated column} reads a pair so that the per-fixing residue carries $\rho$-dependent increments over five live seeds, at $n-1=11$ columns each $2$-local. The resulting sampler is an exact counterexample to the level form of Open Problem~\ref{open:dist} at $p=3$, with $\beta=0.1719<3/16$ the exact optimum over the full predictor pool and rank $11$, and it dies at the weight-histogram door, which subsumes the residue door. The construction and the battery that confirms the linkage are App.~C.

\paragraph{The regenerating-structure class is Test-2a-safe, provably.} The sharpest non-symmetric escape is a sampler whose visible structure \emph{regenerates} per fixing, each of $t_g$ disjoint groups reading a live seed through an independent balanced $3$-local function. The whole chain is machine-checked in the Regenerating module, so such a sampler provably passes Test~2a at full rank, which is why the residual is not the regenerating family. The construction and the chain are in App.~C.

\paragraph{Fiber rigidity at $\beta=0$.}
The mixture-transfer statement reduces at $\beta=0$ to a fiber-rigidity question. If $g=F\circ R$ exactly, then $F(R)$ is a $d$-junta reading seeds $K$, and within each $K$-slice the residue $R$ can only navigate its fiber: every reachable value lies in $F^{-1}(g(\sigma))$. Two verified rigidity facts govern the navigation:

\begin{lemma}[Fiber rigidity]\label{lem:fiber}
For odd prime $p$ and any $r$: \textup{(i)} the $p$-shift always leaves the fiber, $F(r+p)=1-F(r)$, verified; \textup{(ii)} the even coset sweeps both $F$-values, some even shift $2k$ with $k\le p$ exits any prescribed fiber, verified, via $2^{-1}=(p{+}1)/2$ modulo $p$ targeting both residues $0$ and $(p{+}1)/2>p/2$.
\end{lemma}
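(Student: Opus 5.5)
Both parts come down to the explicit description of $F$ that Lemma~\ref{lem:struct} already gives. Write a residue $r\in\mathbb Z/2p$ and put $F(r)=\mathbf 1[(r\bmod p)>p/2]\oplus(r\bmod 2)$. This is well defined because $2p\mid$ both moduli' common multiple, i.e., $(r\bmod p,\,r\bmod 2)$ depends only on $r\bmod 2p$. The plan is to establish (i) directly from this form, and then reduce (ii) to choosing a suitable multiple of $2$ modulo $p$.

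For part~(i), compare $F(r+p)$ with $F(r)$. Adding $p$ leaves $r\bmod p$ unchanged, so the majmod factor $\mathbf 1[(r\bmod p)>p/2]$ is the same. Since $p$ is odd, adding $p$ changes the parity of $r$, so $(r+p)\bmod 2=1-(r\bmod 2)$. Exactly one XOR factor flips, and therefore $F(r+p)=1-F(r)$. One bookkeeping check is needed: $r\mapsto r+p$ has to be a well-defined map on $\mathbb Z/2p$ that is compatible with both reductions. It is, because $p\equiv 0\pmod p$ and $p\equiv1\pmod 2$.

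For part~(ii), fix $r$ and a target value $c\in\{0,1\}$; we want some $k\in\{1,\dots,p\}$ with $F(r+2k)\neq c$. An even shift leaves the parity factor alone, so $F(r+2k)=\mathbf 1[((r+2k)\bmod p)>p/2]\oplus(r\bmod 2)$. As $k$ runs over $0,\dots,p-1$, the residue $(r+2k)\bmod p$ runs over all of $\mathbb Z/p$, since $2$ is invertible modulo odd $p$ with inverse $(p+1)/2$. Take $k_0\equiv -r\cdot\frac{p+1}{2}$ and $k_1\equiv(\frac{p+1}{2}-r)\frac{p+1}{2}\pmod p$. Then $r+2k_0\equiv0$ and $r+2k_1\equiv(p+1)/2\pmod p$. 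The majmod factor is $0$ at residue $0$ and $1$ at $(p+1)/2$, because $(p+1)/2>p/2$. So after XOR with the fixed parity, the two shifts give both $F$-values, and one of them must leave the prescribed fiber. To land in the range $1\le k\le p$, replace a representative $k=0$ by $k=p$; this is harmless since $2p\equiv0$ in $\mathbb Z/2p$.

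The only step that needs care is this last one: the range convention $k\le p$, and confirming that an even shift modulo $2p$ corresponds to the shift by $2k$ modulo $p$ without any loss. The reason it works is that $\mathbb Z/2p\cong\mathbb Z/2\times\mathbb Z/p$ by CRT, and even shifts act trivially on the first factor and transitively on the second. With that identification in place, both parts are finite, decidable computations on $\mathbb Z/2p$. This matches the claim that they are verified in Lean, where the natural route is to state everything through the CRT decomposition.
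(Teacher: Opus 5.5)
Your proof is correct and follows the paper's own route. Part (i) holds because adding $p$ fixes $r\bmod p$ and, since $p$ is odd, flips the parity factor; part (ii) uses $2^{-1}=(p+1)/2$ modulo $p$ to move the even shift onto the residues $0$ and $(p+1)/2$, where $\mathrm{majmod}$ takes both values while the parity factor stays fixed.
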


Consequently a subgroup coset never fits inside a fiber, and within-slice navigation must use genuinely position-dependent steps: constant per-group deltas can navigate, the pair $\delta=(5,5)$ sweeps the full fiber $\{0,4,5\}$ of $p=3$, but the navigation set is confined to a single fiber, of $p$ points.

\begin{remark}[$\beta=0$ dichotomy, informal]\label{rem:beta0}
At $\beta=0$ the picture is the following, supported by Lemma~\ref{lem:fiber} and the battery numerics of Table~S1 but not yet a proof. Either the slice fibers coincide across junta fixings, in which case the residue marginal is supported on one fiber and has total variation at least $1/2$ from uniform, caught by Test~2a. Or distinct slices use distinct fibers, which forces $g$ to read the slice variable and the $z$-support to vary its base weight across slices, so the bases carry the rank. In the constant-increment regime, where every slice carries the same increment, the reachable set of each slice is a fixed subset-sum set $S(\delta)$ that must fit in every shifted fiber; no nontrivial coset does, by Lemma~\ref{lem:fiber}.

The step completing this dichotomy is the rank bookkeeping for weight-degenerate affine families, bases whose base weight is constant modulo $2p$ while spanning $\mathbb F_2^{n-1}$. The all-even-weight case is machine-checked in the WDRank module of Sec.~\ref{sec:lean}, which pins three things: the parity lock, every even-weight combination having even weight; the residue door, the residue locked to the $p$ even residues with total variation $\ge1/2$ against uniform; and the rank bookkeeping, the rows spanning at most $N-1$ of the $n-1$ $z$-bits via the parity-kernel dimension. Such a family is therefore caught by the residue door or the rank door. The transfer across fixings that Open Problem~\ref{open:view} poses remains open.
\end{remark}
Every $\beta=0$ construction built in this work within the remark's model, the $2p$-block, the escape, E1, E3, and the $\delta{=}(5,5)$ fiber navigator, is residue-confined with total variation $\ge1/2$ or rank-deficient, as the remark predicts. The four-door construction of Sec.~\ref{sec:dichotomy} is $\beta=0$ outside that model, full rank and residue-uniform, and falls at door (2e) instead.

The residual is thereby two independent knobs rather than one. A direct measurement over the full-rank witnesses that clear both structure doors, App.~C, shows the label is \emph{not} per-fixing-determinate, its purity, the majority answer's rate within a fixing, well above the fully-ambiguous floor, and the per-fixing $\beta$ uniform in mean. So $\beta$ is a global tag$\leftrightarrow$label correlation, while the residue and weight-histogram doors are mediated by the per-fixing concentration and the across-fixing mean span, whose geometric core is machine-checked.

A census of the whole $\beta\le1/4$-reachable family, App.~C, confirms this from the opposite side and quantifies its cost. $\beta\le1/4$ is attained at span below $2$, so span is \emph{not} what buys $\beta\le1/4$; every four-door passer falls inside the bounded span window $[3.0,4.0]$, whose ends no passer crosses; the higher-span deep instances are weight-distorted and caught; and every $\beta\le1/4$ instance of small span trips both the residue and the weight-histogram door with wide margin. The machine-checked bound $\mathrm{span}\ge2\sqrt{V-v}$, with $V$ the marginal variance and $v$ the mean conditional variance, holds across the whole reachable family.

The across-fixing mean span is therefore itself a structured covariate of the doors. A re-read of the Kane--Ostuni--Wu proof at the level of its stages finds no object controlling an across-fixing span, so bounding it is new mathematics rather than a bounded piece of the existing machinery, and bounding the span would bound the Test-2 doors without subsuming $\beta$. The measurement that closes the naive route is that full rank does not force increments equal to one, App.~C.

\paragraph{The mixture-transfer face of Open Problem~\ref{open:view}.} What that object asks, on the mixture side, is a proof that the three doors, residue uniformity, full $\mathbb F_2$-rank and weight-near-Binomial, close the $\varepsilon$-close class: every $\mathrm{TV}(P_z,U_{n-1})\le\varepsilon/2$ NC$^0$ imposter at $\varepsilon$-far from $D_n$ either has $\beta\ge\varepsilon/2$, which Test~1 rejects, or trips one of them. Soundness then reduces to Open Problem~\ref{open:dist}.

The natural ``beyond-band'' version, that no full-rank $\beta\le1/4$ all-three-door passer exists past the small coexistence depths, is \emph{false}: a larger-budget, different-basin search finds one at $n-1=22$, near-uniform, full rank and near-Binomial, so that set is not bounded in depth. Its members are always $\varepsilon$-far and always carry $\beta>3/16$, so Test~1 rejects each and soundness is not threatened; what remains open is therefore the \emph{sufficiency} of the doors on the $\varepsilon$-close class, and the distinct, narrower $\beta\le3/16$ sliver. The tempting reformulation, that full rank forces the fixed-part centers to carry a seed-complexity linear in $n$ and hence $\beta\ge\varepsilon/2$, is \emph{not} an equivalence: the in-band passers are full rank and pass all three doors at $\beta\le\varepsilon/2$, so the refuted reading is the one that would let the rank door stand in for Test~1. The structural reason is that the $z$-support rank is a property of the \emph{fixed columns} while the across-fixing center mixture is a property of the \emph{fixing}, so the two are decoupled rather than linked.

The criterion's chain is machine-checked in full, and it rests on two premises: a conditional-variance budget below the model's resolution, and a covering window of width mean span plus $\ell^1$ increment mass. Across the scanned family the window exceeds the mean span on all $273$ census rows and $14$ directed ones, by a median factor of $3.7$, and the realized dispersion sits \emph{within} the fixings rather than across them, which is why the chain clears $\tau'$ at the model's budget and not at the realized one. What the chain cannot recover from the doors is the conditional variance itself; the surplus over the premise is exactly the sum of two endpoint deficits, and it vanishes when the extreme conditional means sit on the widest fixings. The chain, its floors, the endpoint decomposition and the tag-cost mean's split are App.~C.

\paragraph{The variance face of Open Problem~\ref{open:view}.} Toward the transfer statement the \emph{weight} side of the $\beta\le1/4$ premise admits a genuine, optimization-derived criterion: for a weight law $X$ on $\{0,\dots,N\}$ with mean $N/2$, a variance deficit $\delta^*(N)$ below $N/4$ is what a weight-histogram-clean sampler would need, and the criterion is the statement that no such sampler exists. It does \emph{not} follow from the doors, and closing it as a theorem would not by itself prove a beyond-band item, none holding, since the near-uniform, full-rank, near-binomial passers that do recur at larger depth all carry $\beta>3/16$ and so are caught by Test~1. A direct measurement ties the criterion to the open core all three faces share: by the law of total variance the across-fixing mean-spread term \emph{is} the across-fixing mean span, and the rank--span coupling is window-local rather than global. The criterion, its optimization and the measurements are App.~C. The nearest classical trade-off between a rank parameter and a concentration does not apply here: Kaufman and Lovett's theorem states that a degree-$d$ polynomial of bias at least $\varepsilon$ has rank at most $c(d,\varepsilon)$ \cite{KaufmanLovett2008}, but their rank is the Green--Tao decomposition rank of a single polynomial and their constant-degree hypothesis fails for $f$, whose degree over $\mathbb F_2$ is $\Theta(n)$.

\section{Lean formalization of the analytic core}\label{sec:lean}
The analytic statements carry machine-checked counterparts: the structural lemma, the $2p$-block identity, the rank-test backbone, the idealized-mixture closure, the residue-rigidity congruence of Eq.~\eqref{eq:rigidity}, and the best-predictor identity that makes Test~1's rejection level and Test~2's floor one statement, together with the four links back to the sampler.

The formalization is in Lean~4 against Mathlib and builds with zero errors, zero \texttt{sorry}, Lean's marker for an unproved statement, and only Lean's three standard axioms, propositional extensionality, choice and quotient extensionality. Every statement machine-checked here is stated and proved, or sketched, in the body.

The collapse theorem is machine-checked chain by chain with the translation of an arbitrary NC$^0$ sampler into the fixing model it assumes, Test~1's estimation budget, Proposition~\ref{prop:affine}, and the residue-subsumption inequality with the $p=3$ smoothing bounds that pin $m_0\le18$ unconditionally at $\beta\le1/4$, with $m_0$ the visible-group count, and their every-prime analogues. What is deliberately \emph{not} formalized is the structural step of Open Problem~\ref{open:dist} itself and the sufficiency of the structure test's doors for the full class. The per-theorem mapping is App.~C.

\section{What is delivered, and what remains}\label{sec:open}
Verification is the credibility infrastructure of a sampling-advantage claim, and this paper builds the check for the sharpest shallow-circuit separation known before reporting where that check ends. The fifth door reads the constant-locality class's correlations rather than a distance. It is the first of four unconditional deliverables, the low-degree member of the structure test's five doors (2a)--(2e), with the construction that the four symmetric doors cannot reach. The other three are a test that assumes nothing about the imposter, Test~1, sample-optimal in its own tolerance $\delta$ up to the logarithmic factor; the machine-checked collapse theorem; and the hub rigidity that makes the fifth door sound on the bounded pinned-residue class.

\paragraph{Status of the verifier.} Test~1 catches every imposter with $\beta$ clearing the dichotomy's bound $\varepsilon/2$ at $O(\delta^{-2}\log(1/\eta))$ samples, its rejection level placed one tolerance below that bound so that no estimation band is left open. No single door catches every sampler built here, and the binding door differs by family. Door (2d) catches the seed-limited ones once its budget reaches the support it must cover. The gated-column family of Sec.~\ref{sec:perfixer}, the small-scale escape from the bare residue-plus-rank union, dies at the third, the weight-histogram door. The affine, regenerating and symmetric classes are caught by Test~1 rather than by Test~2. The three tiers the paper keeps apart are these: proved results, measured ones under a stated model and budget, and predicted ones, which are targets rather than demonstrated performance and are the only incomplete tier.

The affine case is closed at every constant locality by Proposition~\ref{prop:affine} together with the parity-junta branch of Theorem~\ref{thm:perfixer}, and the symmetric case is confined by the Kane--Ostuni--Wu theorems. Theorem~\ref{thm:perfixer} confines the surviving class to few residue-relevant bits or a parity junta on a $1-\tfrac83\beta$ fraction of fixings, at least half of the fixings at the Test-1 survivor premise $\beta\le\varepsilon/2-2\delta=25/144$. The step bounding that fraction is machine-checked in Lean.

Soundness of Test~2 as a whole is the single structural residual, Open Problem~\ref{open:dist}, now located precisely: it is a sampler satisfying
\begin{equation}\label{eq:residual}
\begin{aligned}
&\mathrm{TV}\bigl(P,D_n\bigr)\ge\varepsilon,\qquad\beta(P)\le3/16,\\
&\text{all five doors clean,}\qquad
\mathrm{TV}\bigl(P_z,P_z^{\mathrm{sym}}\bigr)\ge\varepsilon/2-\tau',\\
&1/\textstyle\sum_z P_z(z)^2>\text{every budget},
\end{aligned}
\end{equation}
Eq.~\eqref{eq:residual} is the residual; the last line is the collision-effective support that door~(2d) prices. The symmetry line is proven here, and it closes the route the other four invite: since the symmetrized $z$-marginal's distance from uniform is exactly the weight-histogram reading, the triangle inequality gives the displayed floor, and the bound moves the wrong way for that route, so no tightening of the five doors can bring the residual into a class whose hypothesis is closeness to symmetric, which is the shape the structure theorems need.

What is delivered is a reduction whose content is sharper than the separation. The least error over predictors reading the string half alone is at most the sampler's own $\beta$, so $\beta\le3/16$ forces an average magnitude of the label's conditional mean of at least $5/8$. A balanced label independent of the string half is unpredictable, so $\beta\le3/16$ forces the label's read-set to meet the string half's seeds, and a seed that no string-half coordinate reads makes the label's signed sum vanish on every fiber. The argument then needs a statement that those shared seeds are read by enough string-half coordinates to be \emph{determined}, which is a statement about the read's \emph{wiring}: constant locality bounds how many seeds one output reads, and not how many outputs one seed feeds. The predictability-energy floor and the fan-out lower bound are both machine-checked.

That residual is not closed here: testing $\mathrm{TV}(P_z,U_{n-1})$ against uniform over a $2^{n-1}$-symbol alphabet is identity testing at the exponential $\Theta(2^{(n-1)/2}/\varepsilon^{2})$ \cite{Canonne2022}. The fooling-construction framework of Bun, Desai, and Ferreira Pinto \cite{BunDesaiPinto2026} is the nearest such machinery, but it covers bounded \emph{classes of distinguishers} under fooling distance, not the $\mathrm{NC}^0$-samplable family itself, so Open Problem~\ref{open:dist} is best read as a sample-complexity barrier, not an impossibility. A $\mathrm{poly}(n)$-sample verifier is therefore confined, on the tested family, to the $\beta$-case together with $z$-structure proxies.

A $d$-local image that is $\tau'$-close to a symmetric distribution is close to a mixture of the parity slices and $\gamma'$-biased products, by the Kane--Ostuni--Wu structure theorem \cite{KaneOstuniWu2025symmetric}, and each such component has residual bias far below the $5/8$ that $\beta\le3/16$ needs. The weight-histogram door tests a necessary proxy for that theorem's hypothesis rather than the hypothesis itself, so the confinement is a statement about the symmetric-close images and not about every weight-law-clean passer: the residual is precisely the non-symmetric $d$-local reach the theorem cannot exclude, and the comparison is asymptotic in $\tau'$.

The items below carry the completeness gap, the two named open problems with their affine case closed and their symmetric case confined, and extend the framework to the advice-free function and to other shallow-circuit constructions.

\textbf{Residue, rank, and weight-law completeness.} The three-door form is refuted, App.~C, so the residual is the $z$-marginal separation $\mathrm{TV}(P_z,U_{n-1})$ of Open Problem~\ref{open:dist}, and its $\mathrm{poly}(n)$-sample testability is open rather than an information-theoretic impossibility. Open Problem~\ref{open:view} stands beside the residual rather than above it: settling it would fix the weight law of every low-$\beta$ sampler, which is what a completed structure test must read, and unsettled it leaves the two branches of the dichotomy open together. Rank alone is insufficient, the tiny-support sampler $\{0,e_i\}$ having $\beta=0$ and full rank while the residue door catches it.

The two analytic cores are verified, Prop.~\ref{prop:decorr} and Lemma~\ref{lem:halfspan}. What the residual still leaves open is the structural step, the non-symmetric realizability of a concentrated center law at full rank, the case of joint $(z,b)$ imposters with non-uniform conditionals inside a periodic half-slice, adjacent to the Kane--Ostuni--Wu theorems \cite{KaneOstuniWu2024hamming,KaneOstuniWu2025symmetric,KaneOstuniWu2025uniform}. The symmetric case beside it is sharpened: the parity-aligned even-weight attainment example's residue distance is machine-checked exact at uniform-on-evens, and its half-slice constant is confirmed exactly at $p\in\{3,5,7,11,13\}$ across $n$, App.~C.

At small scale the level form is settled \emph{negatively}: the random $2$-local family returns samplers passing all three repaired doors at once, but every one carries $\beta>3/16$, so Test~1 alone catches each, and no sampler in the combined trials passes both tests. The repaired doors are therefore not individually a dichotomy on $\beta\le1/4$, and the completeness question at the verifier-relevant premise $\beta\le3/16$ remains the narrow band identified above. The residual localizes to the per-fixing collapsed corner, and the pure group form of the collapse yields a machine-checked architectural band: $\mathbb F_2$-independent columns force $n-1\le m_0+2$, with $m_0$ the visible-group count, so at $p=3$ under $\beta\le1/4$ a four-door passer beyond $n-1=20$ cannot be written in group form at all. Open Problem~\ref{open:view} is thereby sharpened to its final form: what is missing is exactly the combination of rank-realizable columns with an across-fixing center mixture. The per-depth scans and the full census are App.~C.

Affine door-passers are provably at $\beta=1/2-o(1)$ by Proposition~\ref{prop:affine}, and the linear rank-deficient case is closed by an exact identity, the residue-character sum over the code's row space being the subset-sum character sum of the kernel code's weight distribution, governed by that distribution rather than by the minimum distance alone \cite{Kleitman1966,NguyenVu2013smallball}. The threshold question therefore lives entirely in the non-affine class.

\textbf{The advice-free function: the dichotomy transfers.} Watts--Parham's advice-free sampler outputs $f=\mathrm{pmmajmod}_p$, a majority-modulo-$p$ read of the tree statistic their sampler computes rather than of the weight, so it is not a function of the weight \cite{BenewattsParham2026}, and the dichotomy reduction transfers verbatim: Test~1 and Test~2b apply unchanged, and Test~2a becomes a uniformity test of their tree statistic modulo $p$ on $p$ symbols \cite{Paninski2008}. The resulting dichotomy is established up to the boundary-strip anti-concentration of their Lemma~31, whose flip-set combinatorics and probability estimates are machine-checked here. The advice-free route is not certified at this tolerance, since the distance bound it carries stands above the device distance $\gamma=1/6$ at every $p$, its large-$p$ infimum $1/2-1/\pi=0.1817$ included, so the device promise used here is the $p=3$ GHZ-advice circuit of Theorem~5.

\textbf{Other constructions, and how far the soundness class reaches.} The reduction is distribution-agnostic: Test~1 transfers verbatim whenever the target imposes a poly-time functional constraint $b=f(z)$, and Test~2b applies to any sampler whose $z$-marginal sits on few affine cosets, so what changes between constructions is only the analytic core. For the geometrically local circuits of \cite{GrierKaneMorrisOstuniWu2026} that core is an explicit phase statistic, separating their target from uniform, independent and local imposters and robust across system sizes. For the measurement-driven samplers of \cite{CaoEisert2026} no certificate-level statistic is known, and their own randomness diagnostics certify typicality rather than a worst case.

Extending the soundness class to the depth-$o(\log\log n)$ samplers that Watts--Parham rule out needs no new idea in the collapse, which is $d$-parameterized, while certifying the collapsed structure from samples is inherently exponential, so the collapse and its sample-verification separate above the boundary. No verifier can certify an advantage against samplers deep enough to contain the target, which Viola's switching network reaches at depth $O(\log^2 n)$. The constructions, the phase-test protocol and the depth analysis are in App.~C.
The two-test verifier catches every sampler this work or its scans have produced, Test~1 for the affine, regenerating and symmetric classes and Test~2's doors for the rest; door (2d) covers every $z$-support the scans reach, on the shared stream up to the seed-gap ladder's frontier and at a budget of order the support beyond it, and the ladder's endpoints fall instead to door (2e), which rejects every one of them; the 702 sampler descriptions the census records go through the rule as one routine, and every one is rejected, App.~C. The single structural step separating the reduction from a finished verifier is the non-symmetric joint case of Open Problem~\ref{open:dist}, and no existing verifier closes it.

What this work delivers is a functional test that assumes nothing about the sampler, its tolerance optimal up to a logarithmic factor, and a collapse theorem machine-checked link by link. It proves the four natural checks of the confined class insufficient, exhibits a construction passing them at every polynomial budget, and establishes the fifth as sound on the bounded pinned-residue class, and it reduces the separating inequality for the full class to a named residual whose completion would make that verifier assumption-free at a poly($n$) sample budget. The certificate asks of hardware a definite target rather than a measured performance: a $43$-qubit GHZ resource state prepared and gated to a fidelity of about $0.99$. See Supplemental Material at [URL will be inserted by publisher] \cite{SupplementalMaterial} for the full TV chain-rule proof, the escape-sampler construction, the numerical protocols and the adversarial battery, Apps.~A--D, the machine-checked Lean theorem tables, and the symmetric-case convergence figure.

\section*{Acknowledgments}
This work was supported by the National Natural Science Foundation of China under Grant No.~12674609 and by the Guangdong Provincial Key Laboratory of Computility Microelectronics under Grant No.~2024B1212010007. All authors contributed equally to this work. The authors declare no competing interests. The authors used DeepSeek-V4.1-Flash for text polishing, the Lean derivations, and the debugging of individual numerical values. The authors verified every machine-checked statement and reviewed all AI-assisted content.

\section*{Data Availability}
The Lean~4 formalization, seventy-eight modules building against Mathlib with zero \texttt{sorry}, the scripts and data behind every figure and table, including the adversarial battery data and the scans of App.~C, and the supplemental derivations are publicly available on Zenodo at \url{https://doi.org/10.5281/zenodo.23121683}.

\bibliography{refs}

\end{document}